\newcommand{\mbE}{\mathbb{E}}
\newcommand{\mbS}{\mathbb{S}}
\newcommand{\mbH}{\mathbb{H}}
\newcommand{\mbL}{\mathbb{L}}
\newcommand{\mbO}{\mathbb{O}}
\newcommand{\mcB}{\mathcal{B}}
\newcommand{\mcD}{\mathcal{D}}
\newcommand{\mcF}{\mathcal{F}}
\newcommand{\mcH}{\mathcal{H}}
\newcommand{\mcN}{\mathcal{N}}
\newcommand{\mbR}{\mathbb{R}}
\DeclareMathOperator{\argmin}{argmin}
\theoremstyle{plain}
\newtheorem{theorem}{Theorem}[section]
\newtheorem{corollary}[theorem]{Corollary}
\theoremstyle{definition}
\newtheorem{definition}[theorem]{Definition}
\theoremstyle{remark}
\title{Gaussian Differentially Private Human Faces 
          Under a Face Radial Curve Representation}
\author{Carlos Soto\\
Department of Mathematics and Statistics\\
University of Massachusetts Amherst\\
Amherst, MA 01003, USA \\
\texttt{carlossoto@umass.edu} \\
\And
Matthew Reimherr   
\\
Department of Statistics\\
Pennsylvania State University\\
University Park, PA 16802, USA \\
\texttt{mlr36@psu.edu} \\
\And
Aleksandra Slavkovic 
\\
Department of Statistics\\
Pennsylvania State University\\
University Park, PA 16802, USA \\
\texttt{abs12@psu.edu} \\
\And
Mark Shriver \\
Department of Anthropology\\
Pennsylvania State University\\
University Park, PA 16802, USA \\
\texttt{mds17@psu.edu} \\
}
\begin{document}

\maketitle

\begin{abstract}
 In this paper we consider the problem of releasing a Gaussian Differentially Private (GDP) 3D human face. The human face is a complex structure with many features and inherently tied to one's identity.  Protecting this data, in a formally private way, is important yet challenging given the dimensionality of the problem. We extend approximate DP techniques for functional data to the GDP framework. We further propose a novel representation, face radial curves, of a 3D face as a set of functions and then utilize our proposed GDP functional data mechanism. To preserve the shape of the face while injecting noise we rely on tools from shape analysis for our novel representation of the face. We show that our method preserves the shape of the average face and injects less noise than traditional methods for the same privacy budget. Our mechanism consists of two primary components, the first is generally applicable to function value summaries (as are commonly found in nonparametric statistics or functional data analysis) while the second is general to disk-like surfaces and hence more applicable than just to human faces.

\end{abstract}

\section{Introduction}
\label{s:intro}
In statistical analyses, data and parameters appear in varying degrees of complexity, from simpler forms such as scalars and vectors to more complex such as spherical or hyperbolic, for instance. The structural constraints inherent to data need to be respected throughout any analysis as has been shown in the ``intrinsic statistics" frameworks \citep{pennec2006intrinsic, bhattacharya2003large} for accurate estimation and to preserve said structure.
Complex data structures tend to come hand in hand with complex statistical computations and hence the techniques for handling such data have not been widely studied outside of specific scenarios. 
Further, the sheer amount of data that is captured from individuals has increased significantly, and of course, this produces a growing concern for one's ``privacy". In this paper, to support broader sharing of confidential data, we propose releasing a \textit{Gaussian Differentially Private}, GDP, average 3D human face.

\textbf{Motivation and Related Literature:}
Data that live in nonlinear spaces can be challenging to work within the DP framework, as shown in the context of manifolds in \citet{NEURIPS2021_6600e06f,soto2022shape,utpala2022differentially}, private Riemannian optimization in \citet{han2022differentially}, and Gaussian DP on manifolds \cite{jiang2023gaussian}. Preservation of structure has also been considered in the context of private covariance estimation for linear regression in \citet{sheffet2019old} and private principal component analysis in \citet{chaudhuri2013near} which is connected to the Stiefel manifold. For privacy, our proposed ``face radial curve" representation of a human face are functions extracted from a disk and hence lends itself to be examined under the lens of private functional data analysis (FDA) which has been considered in\citet{wasserman2010statistical} and \citet{mirshani2017establishing}, and references therein. 

With our faces constantly being captured (e.g., at a grocery store self-checkout), one could question the privacy protections in place of the collected data. Further, there is a vast amount of literature on identification of individuals from facial data in the area of ``face recognition," see, for instance, the expansive literature review in \citet{kortli2020face}. Also, perhaps quite surprisingly, \citet{venkatesaramani2021re} and \citet{klimentidis2009estimating}  showed that one could identify genomic information from a person's 2D face image; the former study further showed that adding noise to said 2D images can help protect this re-identification. For 2D images, two common practices to attain privacy of faces are blurring and pixelation \citep{li2021deepblur,vishwamitra2017blur}. A major goal in these previous works is privacy in form of \textit{anonymity} which is usually measured by re identification of individuals; our work, however, defines privacy by satisfying the conditions of \textit{differential privacy}. This distinction is necessary as the former, generally, considers successful privacy by lack of re identification while preserving utility while for the latter differential privacy is a property of our mechanism. We do not work with 2D images here, however, the need for privacy is still present. The need for privacy for 3D faces is similar to that in 2D images. Anthropological studies are often interested in identification of DNA using labeled 3D faces \citep{sero2019facial}, the connections between DNA genotype and the associated facial phenotype \citep{white2021insights,naqvi2021shared,weinberg2019hunting}, and average faces for demographics such as age and race classification \citep{tokola20153d}. 
In such studies, one might want to release the data, or may even be required to, so offering provable confidentiality  guarantees for people in the studies becomes an important task. 

To generate our representation we rely on tools from {\it shape analysis}. The earlier forms of shape analysis, such as Kendall's shape space \citep{kendall1984shape}, are limited to only representing a shape as a finite point cloud. The  field has expanded since to consider more complex data structures such as continuous curves, both planar and space, as in \citet{trouve2005local,klassen2004analysis,srivastava2010shape}, and surfaces as in \citet{jermyn2017elastic,su2020shape}. Human faces have been considered in this space in a similar, yet subtly different, manner such as in \citet{samir2006three,drira2010pose} in which faces are represented as a set of curves ``facial curves" and ``radial curves", respectively. These methods represent a face with curves that are generated independently of each other while we use an entire disk-parameterization to capture features across all faces simultaneously.

\textbf{Main Contributions:}
We develop a novel representation for a collection of 3D faces via a set of curves which we extract from disk parameterizations and refer to these as \textit{face radial curves}. We construct the face radial curves using tools from statistical shape analysis in the interest of preserving the \textit{shape} of an average face during the data sanitization process.
Further, we extend existing approximate DP FDA tools into the Gaussian DP framework \citep{dong2019gaussian}, a recent notion of privacy with attractive properties including ``tight" composition. Under our FDA Gaussian DP mechanism, we generate a private average face of a collection of faces under our representation. While we use our mechanism and representation for faces to address the inherent data privacy concerns, this same methodology can be applied to any surfaces diffeomorphic to a unit disk such as terrain models and additionally any settings where one wants private functional statistical summaries.




\section{Notation and Background on Face Representation}
We require a disk-parameterized representation of each face from which we extract a set of functions. We parameterize each face independently but ``align" and ``register" each face to a template. Here, we broadly describe these relevant techniques from shape analysis including their necessity.
To the best of our knowledge, the use of parameterized surfaces in the context of DP has not been explored. The software to accomplish parameterization, registration, and alignment are fully described in \citep{jermyn2017elastic}, with accompanying implementation at GitHub repository \citep{LagaGitHub}. Further, in Figure \ref{fig:diagram} we display the entire pipeline for our methodology.

\subsection{Parameterization}\label{ss:Param}
Each face is realized as a point cloud, a set of $p$ many  points in $\mbR^3$. 
We describe the data collection process in \ref{ss:Data}. 
A point cloud does not explicitly relay structural information; i.e., there is no natural ordering nor explicit connectivity or relationship between points. Connectivity can be difficult to infer since if two points are close in space, they may not be neighboring points, e.g., two points near the tip of the nose, or any concave feature, can be close in $\mbR^3$, but based on measuring distance on the surface of the face they may be relatively far apart. So, rather than treat a face as a set of independent points, we can jointly model each face by imposing a parameterization. Specifically, we represent each face as a disk-parameterized surface \citep{jermyn2017elastic,sun20223d,drira2010pose}.

\begin{figure}
    \centering
    \begin{tabular}{@{}c@{} @{}c@{} @{}c@{} }
        \fbox{\includegraphics[trim = 20 50 20 30, clip, height = 1.1in]{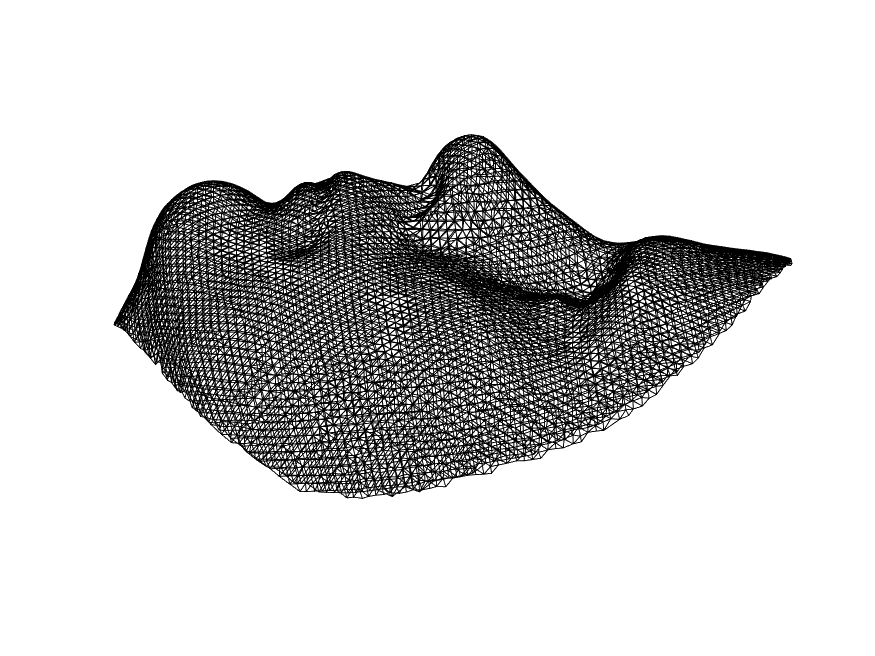}}  & 
        \fbox{\includegraphics[trim = 20 50 20 30, clip, height = 1.1in]{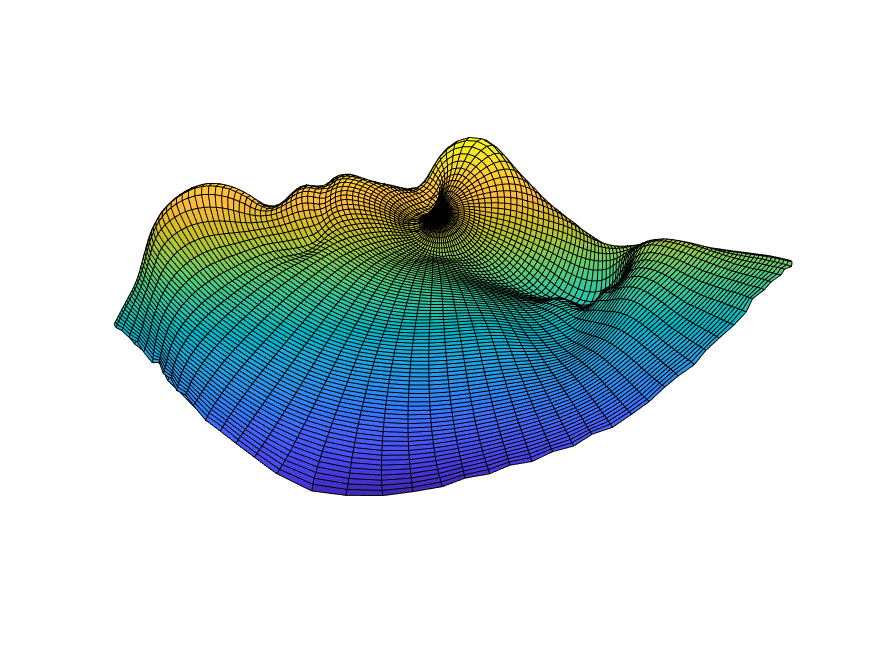}} &
        \fbox{\includegraphics[trim = 20 50 20 30, clip, height = 1.1in]{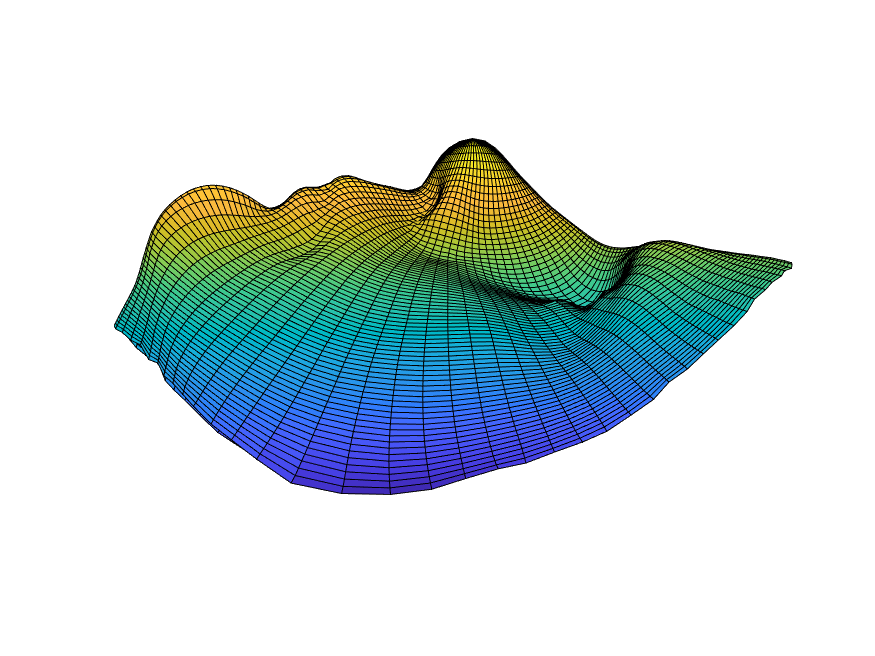}} \\
    \end{tabular}
    \caption{Left: A triangulated mesh face. Middle: A disk-parameterized face, the center of the disk being the left nostril. Right: A disk-parameterized face after a M\"obius transformation forcing the center of the disk to be the tip of the nose.}
    \label{fig:FaceParam}
\end{figure}

We first compute a Delaunay triangulation using the MeshLab software \citep{LocalChapterEvents:ItalChap:ItalianChapConf2008:129-136} yielding a triangulated mesh. Given a point cloud $X\in\mbR^{p\times 3}$ the triangulated mesh, $M=\{V,E,T\}$, is an object which consists of a set of vertices $V$ (the original points), edges $E$, and triangles $T$ composed of said vertices and edges. The triangles meet at edges, do not overlap, and jointly represent a mesh or surface. The left panel of Figure \ref{fig:FaceParam}  displays a face as a triangulated mesh. We generate a \textit{disk conformal map} of $M$ as in \citet{choi2018linear}; a disk conformal map of the triangulated mesh is a mapping from said mesh to a disk which preserves the angles of the triangles or, more generally, the local geometry of the mesh. 


Lastly, we generate a disk-parameterized surface as in \citet{jermyn2017elastic,laga20183d} via the disk conformal mapping. Figure \ref{fig:FaceParam} displays two disk-parameterized surfaces in the middle and right panels; we expand on the differences between these two panels in \ref{ss:Mobius}.
Each face, $f$, is thus a map $f:D\rightarrow \mbR^3$ with $D=\{r,\theta| 0\leq r\leq 1, 0\leq \theta< 2\pi \}$, the unit disk.






\subsection{Shape Analysis of Surfaces}\label{ss:esa}
Figure \ref{fig:FaceParam} displays different representations of a face while retaining its shape. In our setting, the shape of an object is that which is not affected by its scale, location, rotation, or parameterization. The analysis we intend to do should not be dependent on any of these parameters. Next, we describe how to remove this dependence.

Let $\mcF$ be the space of all parameterized surfaces, $\mcF\ni f:D\rightarrow \mbR^3$ where $D$ is the unit disk. We assume all surfaces are smooth and genus-0, that is, they are differentiable and have no holes. We first remove scale and location differences by forcing each face to have unit surface area and be centered at the origin. That is, we scale the surface area to be one by setting $f\rightarrow f/\int_D |f_r\times f_{\theta}|_2 drd\theta$ where $f_r, f_{\theta}$ are the partial derivatives of $f$ with respect to $r$ and $\theta$, respectively. 
To center the surface, set $f\rightarrow f-\bar{f}$ where $\bar{f}$ is the centroid of the surface. For notational simplicity, 
let $f$ denote a surface which has unit surface area and the origin as its centroid.

Location and scale are characteristics that are intrinsic to each surface, so we achieve their removal on each face independently. Rotational alignment and parametric registration, however, are relative. Each face must be aligned and registered to a template face.\footnote{In shape analysis it is more typical to do pairwise alignment and registration \citep{wallace2014pairwise,cho2019aggregated,klassen2004analysis} however our goal is not to do pairwise comparisons so we forego this approach.} Let $f_{temp}$ be a template face of unit area and origin centroid; in practice one could either choose an arbitrary surface from the dataset, a training set, or some representative surface such as the mean surface.

Let $\mbO=\{O|\text{det}(O)=1\}$ be the set of all $3\times 3$ rotation matrices and let $\Gamma=\{\gamma|\gamma:D\rightarrow D\}$ be the set of diffeomorphisms. Each $\gamma$ is a reparameterization of a surface which acts on a surface on the right as $f\circ\gamma$; we fully describe this action in \ref{ss:reg}. Rotations, $\mbO$, act on the left as $O\cdot f$ and do not effect scale nor centroid.

We optimally align each face to the template, $f_{temp}$,  by considering the optimization $$(\tilde{\gamma},\tilde{O}) = \argmin _{\gamma\in\Gamma, O\in\mbO}\|Of\circ\gamma-f_{temp}\|_2.$$ To accomplish this task we appeal to \textit{elastic} shape analysis (ESA) \citep{jermyn2017elastic}. In short, rather than solve the above minimization, ESA defines a metric which replaces the above $\mbL^2$ norm and transforms each surface using the \textit{square-root normal field} representation. For each face we compute this minimization such that the registered and aligned face is $\tilde{f}:=\tilde{O}f\circ\tilde{\gamma}$. We elaborate on this process in \ref{ss:reg}.


\section{Novel Face Radial Curves Representation}\label{ss:FRCP}
We parameterize the entire face using a disk, however, our goal is to work with what we refer to as ``face radial curves." This concept is  similar to that of \textit{facial curves} \citep{samir2006three} and \textit{radial curves} \citep{drira2010pose}. In the facial recognition literature, for instance, to construct a radial curve representation one picks a focal point, typically the tip of the nose, and overlays curves on the surface of the face with the nose as the center and each point of the curve being equidistant from the tip of the nose. That is, given a focal point $p$ each radial curve is $g=\{x|r= d_F(p,x)\}$ for a given $r\geq 0$ and where $d_F$ is distance measured on the surface of the face.

These previously mentioned methodologies do not take into account that the distribution of features on faces is not uniform for everyone. That is, let $g_{ij}$ represent the $j$th radial curve of the $i$th face, then $g_{ij}$ may be the curve which goes directly on the middle of the eyes for individual $i$ but $g_{i'j}$ may be directly on the forehead of individual $i'$ for the same $j$. Features, such as eyes and the mouth, being disproportionately farther or closer to the focal point or each other hence causes issues. These methodologies do, however consider registration within each $j$
but not across all $j$s simultaneously. We propose a new method to have an entire global alignment and registration across all sets of curves and faces.



In the previous section, we set up a way to compute faces $\{\tilde{f}^{(i)}\}_{i=1,\cdots n}$ that are registered and aligned to a template face $f_{temp}$. 
Each face is a mapping $\tilde{f}^{(i)}:D\rightarrow \mbR^3$ and we note a disk is an infinite collection of concentric circles, and thus each $\tilde{f}^{(i)}$ is a collection of curves $\tilde{f}^{(i)}_{r}:D_r\rightarrow\mbR^3$, where $D_r\subset D$ is the circle of radius $r$. By construction, the alignment and registration to a template implies these curves $\tilde{f}^{(i)}_{r}$ capture the same features across all faces; these curves are what we refer to \textit{face radial curves}. 

We apply our proposed method to data described in \citet{sero2019facial}; more details are available in \ref{ss:Data}. The left panel of Figure \ref{fig:facesrad} displays a face as a point cloud and the right panel is the same face but with the disk parameterization and some of the face radial curves overlain. 
While the disk parameterization is an infinite collection of these curves, we pick some number of these curves to represent the face. The number of curves is a tuning parameter where more curves implies, to an extent, more definition in the face. We discuss this further in \ref{ss:FaceExample}.


\begin{figure}
    \centering
    \begin{tabular}{ccc}
        {\includegraphics[trim = 100 30 80 20, clip, height =1.4 in]{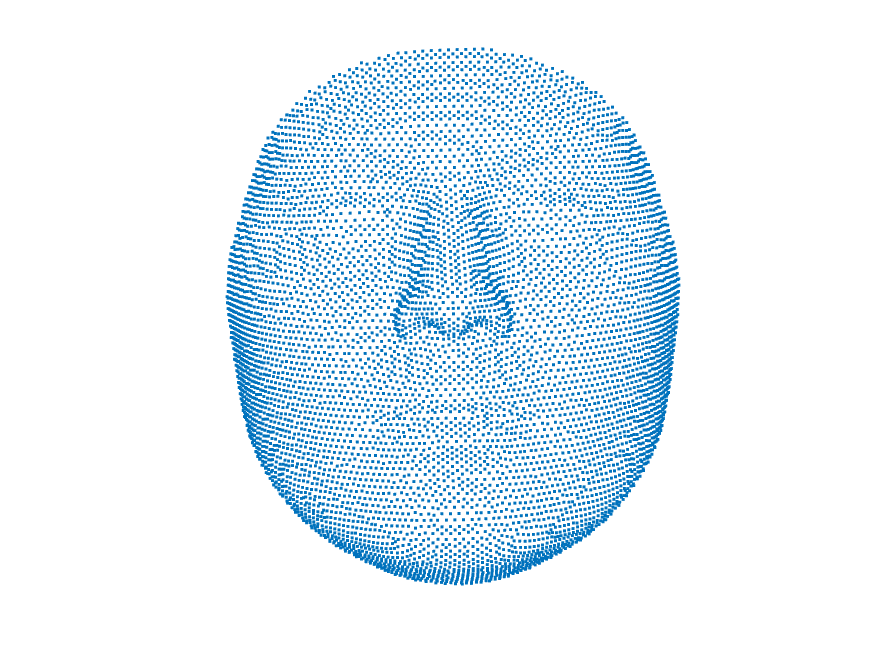} }&
        \includegraphics[trim = 100 30 80 20, clip, height =1.4 in]{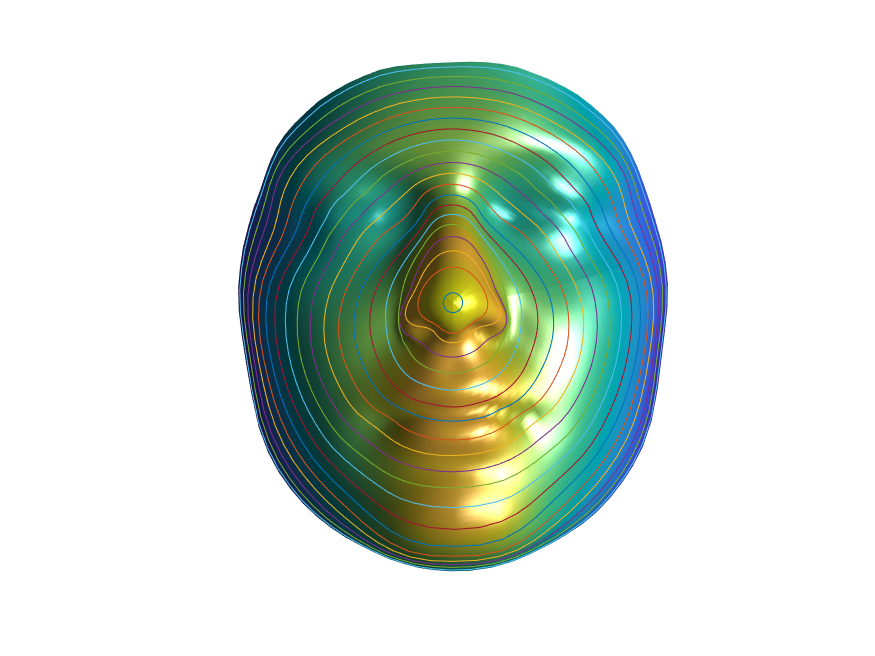}
    \end{tabular}
    \caption{Left: A point cloud representation of a face. Right: A surface representation of the face with an overlay of radial curves.}
    \label{fig:facesrad}
\end{figure}

\section{Gaussian Differential Privacy for Functional Data}\label{ss:GDPFDA}
Since the conception of DP \citep{dwork2006calibrating}, noise calibration has been considered from different perspectives leading to many variants.  
For instance, ``zero concentrated differential privacy" \citep{bun2016concentrated} and ``R\'enyi differential privacy" \citep{mironov2017renyi} are notions of DP from the perspective of a divergence of the distribution of the mechanism. In the present paper we utilize ``Gaussian DP"  (GDP, \cite{dong2019gaussian}), which considers DP from the perspective of a particular set of hypothesis tests. 
As one of our key contributions in this paper, we extend GDP to functional data analysis (FDA) in Section~\ref{ss:GDPFDA1}. We begin with a brief overview of DP for FDA. 

\subsection{Background on Differential Privacy for Functional Data}\label{ss:DP}
In this section and in \ref{ss:DPextra}, we describe approximate DP, $(\epsilon,\delta)$-DP, in the context of FDA. This is a needed background for our proposed extension of GDP into the space of functional data. 
For a more thorough exposition on DP, see \citet{blum2005practical,dwork2014algorithmic,dwork2006calibrating}. 

Let $D$ denote a dataset of a size $n$, $D=\{x_1,x_2,\dots,x_n\}$, and $\mcD$ be the space of all such datasets. An \textit{adjacent} dataset, $D'$, is a dataset which differs from $D$ in exactly one element which, without loss of generality, we can choose as the last element, $D'=\{x_1,x_2,\dots,x_n'\}$. We write $D\sim D'$ to denote adjacency. 
Here $h(D)$ denotes the statistical summary we aim to release, and a private random version of the summary as $\tilde{h}(D)$, which we will refer to as a privacy mechanism.

We consider releasing an estimate of a private mean function. To sanitize functions we rely on tools and foundations of functional data analysis in the domain of privacy as in \citet{hall2013differential,alda2017bernstein,mirshani2017establishing}, with the latter considering spaces more extensive than functions. The infinite dimensional nature of function valued summaries presents a critical challenge in establishing formal privacy.  In particular, traditional probability densities become much more complicated as there is no default or baseline measure in infinite dimensions (unlike Lebesgue measure in $\mbR^d$).  To overcome this challenge, there are currently two approaches.  The first, taken by \cite{hall2013differential} is to work in finite dimensions and then take careful limits.  The second approach, introduced in \cite{mirshani2017establishing} and which we follow here, is to utilize carefully constructed infinite dimensional densities so that the probability inequalities can be worked out directly (and avoid having to take limits).

Let $\mbH$ denote a real separable Hilbert space in which we aim to release a summary statistic $h(D) \in \mbH$, e.g., $\mbL^2([0,1])$, $\mbR^n$, or a reproducing kernel Hilbert space. We consider a Gaussian process in $\mbH$ to add noise to $h(D)$.
Let $X$ be a Gaussian process in $\mbH$ parameterized by its mean $\eta\in\mbH$ and covariance operator $C:\mbH^*\rightarrow\mbH$ where $\mbH^*$ is the dual space of $\mbH$. A stochastic process is said to be Gaussian if any linear functional $a\in\mbH$ of $X$ is Gaussian in $\mbR$. We note that technically $a\in\mbH^*$, however since $\mbH\cong\mbH^*$, we avoid this distinction unless necessary. Further, we have that $\mbE[a(X)]=a(\eta)$ and $C(a,b)=\text{Cov}(a(X),b(X))$ with $a,b\in\mbH$. We write that $X\sim \mcN(\eta,C)$ and $Z\sim \mcN(0,C)$. In this setting we can achieve approximate DP via Theorem \ref{thm:fdp}. A critical requirement is that our summary be \textit{compatible} with the noise $Z$ which, roughly stated, means that while the noise lives in $\mbH$, we require our summary to exists in a smaller space $\mcH \subset \mbH$. In our case $\mcH$ is a reproducing kernel Hilbert space (RKHS) so the eigenfunctions arise from $C$ and thus are determined by our choice of kernel $k(s,t)$.

Now that we have well defined noise, we consider releasing a specific summary statistic: a private mean function with sample mean $h(D)=\bar{X}=\frac{1}{n}\sum_i x_i$. To have some control on the smoothness of this estimate, we enforce smoothness by relying on an optimization problem with a penalty. That is, we let $h(D)=\argmin_{m\in\mcH} \sum_i\|x_i-m\|^2_\mbH+\phi\|m\|_{\mcH}^2$ with $\phi$ being a smoothness penalty parameter. This approach ensures the required noise compatibility. 
For our summary statistic, the sensitivity, $\sup_{D\sim D'}\|h(D)-h(D')\|^2,$ can be shown to be bounded as $\Delta^2\leq 4\tau^2/(n^2\phi)$  \citep{mirshani2017establishing}, where $\tau$ is a finite bound on the norm in the $\mbH$ space of the elements of all $D\in\mcD$.


\subsection{Extension of GDP to Functional Data}\label{ss:GDPFDA1}
One especially useful interpretation of approximate DP comes from \citet{wasserman2010statistical}, which relates DP to hypothesis testing.  In particular, for a sanitized output, $\tilde h(D)$, one can consider statistical tests for determining if the true underlying data source is $D$ or some adjacent data set $D'$.  In these simple cases, the optimal test is well known (Neyman-Pearson Lemma), so one can talk about the optimal type 2 error (1-power) of a statistical test for given type 1 error rate.  It turns out that DP gives a bound for the type 2 error relative to the type 1 error, which \citet{dong2019gaussian} took a step further to define Gaussian DP.  In particular, \citet{dong2019gaussian} show a mechanism is $\mu$-GDP if its entire type 1/type 2 error tradeoff curve is bounded by that of a curve coming from comparing a $N(0,1)$ to $N(\mu,0)$, meaning it is harder to distinguish $0$ and $\mu$ from $N(0,1)$ than it is $h(D)$ from $h(D')$ from $\tilde h(D)$.  We provide the formal definition below. 
\begin{definition}
    A mechanism $\tilde{h}(D)$ is said to satisfy $\mu-$Gaussian differential privacy ($\mu-$GDP) \citep{dong2019gaussian} if for all adjacent datasets $D\sim D'$, $$T(\tilde{h}(x;D),\tilde{h}(x;D'))\geq G_{\mu},$$
    where $T$ is a trade-off function and $G_{\mu}:=T(N(0,1),N(\mu,1))$. Here a trade-off function $T:[0,1]\rightarrow[0,1]$  for two probability distributions $U_1$ and $U_2$ is $T(U_1,U_2)(\alpha)=\inf_\zeta\{\beta_\zeta:\alpha_\zeta\leq \alpha\}$ with $\zeta$ being a rejection rule, $\alpha$ being the type 1 error, and $\beta_\zeta$ the type 2 error for $\zeta$.
\end{definition}
In our main Theorem \ref{thm:FDAGDP}, we prove that the DP framework for functions as defined in Theorem \ref{thm:fdp} is $\mu-$GDP; we utilize the following corollary 
to achieve this.
\begin{corollary}[\citet{dong2019gaussian}]\label{cc:GDP_DP}
    A mechanism is $\mu-$GDP if and only if it is $(\epsilon,\delta(\epsilon))-$DP for all $\epsilon\geq 0$, where $\delta(\epsilon)\geq \Phi\left(-\frac{\epsilon}{\mu}+\frac{\mu}{2}\right)-e^{\epsilon} \Phi\left(-\frac{\epsilon}{\mu}-\frac{\mu}{2}\right).$ 
\end{corollary}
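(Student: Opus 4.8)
The plan is to reduce the statement to a single pair of distributions and then exploit the primal--dual correspondence between trade-off functions and families of $(\epsilon,\delta)$ guarantees. Since $\mu$-GDP requires $T(\tilde{h}(D),\tilde{h}(D'))\geq G_\mu$ for every adjacent pair $D\sim D'$, and $(\epsilon,\delta)$-DP is likewise a statement holding uniformly over adjacent pairs, it suffices to prove the equivalence at the level of a fixed pair of output distributions $P,Q$: the bound $T(P,Q)\geq G_\mu$ holds if and only if, for every $\epsilon\geq 0$, the pair $(P,Q)$ is $(\epsilon,\delta(\epsilon))$-indistinguishable with $\delta(\epsilon)$ as claimed.

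First I would recall the standard dictionary between the two languages. A pair $(P,Q)$ is $(\epsilon,\delta)$-indistinguishable exactly when its trade-off function dominates the piecewise-linear function $f_{\epsilon,\delta}(\alpha)=\max\{0,\,1-\delta-e^{\epsilon}\alpha,\,e^{-\epsilon}(1-\delta-\alpha)\}$. Every trade-off function is convex, non-increasing, and lies below $1-\alpha$, and conversely each such function is recovered as the lower envelope of the family $\{f_{\epsilon,\delta}\}$ that it dominates. Consequently $T(P,Q)\geq G_\mu$ is equivalent to asking that $G_\mu$ itself dominate $f_{\epsilon,\delta(\epsilon)}$ for the smallest admissible $\delta(\epsilon)$ at each $\epsilon$, which is precisely the family of $(\epsilon,\delta(\epsilon))$ constraints appearing in the corollary.

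The remaining work is to compute $\delta(\epsilon)$ explicitly for $G_\mu=T(N(0,1),N(\mu,1))$. Here I would pass to the privacy-loss (log-likelihood-ratio) random variable. Taking $P=N(\mu,1)$ and $Q=N(0,1)$, the loss is the affine function $L(x)=\mu x-\mu^2/2$, and the tight divergence $\delta(\epsilon)=\sup_S[P(S)-e^{\epsilon}Q(S)]$ is attained on the half-line $S=\{x:L(x)>\epsilon\}=\{x>\epsilon/\mu+\mu/2\}$. Evaluating the two Gaussian tail probabilities on this half-line gives $P(S)=\Phi(-\epsilon/\mu+\mu/2)$ and $Q(S)=\Phi(-\epsilon/\mu-\mu/2)$, and subtracting yields the stated $\delta(\epsilon)$; symmetry of the two Gaussians shows the reverse direction contributes the same bound.

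The main obstacle is the duality step invoked above: establishing that a single trade-off-function bound is equivalent to an entire \emph{family} of $(\epsilon,\delta)$ guarantees rather than one. This is where the convex-analytic content resides, since one must show both that $G_\mu$ is recoverable as the infimum of its dominating piecewise-linear functions (the reverse implication) and that this infimum is realized by the $\delta(\epsilon)$ computed via the likelihood-ratio region (the forward implication). By contrast, the explicit Gaussian tail computation in the third step is routine once the correct rejection region has been identified.
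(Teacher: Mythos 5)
The paper does not prove this corollary at all: it is imported verbatim from \citet{dong2019gaussian} (and the text notes it also appears in \citet{balle2018improving}), so there is no in-paper argument to compare against. Your proposal correctly reconstructs the standard proof from the cited source --- the reduction to a fixed pair of distributions, the primal--dual correspondence between a trade-off-function bound and the family of $f_{\epsilon,\delta}$ dominations (Dong et al.'s Propositions 2.11--2.12), and the explicit evaluation of $\delta(\epsilon)=\sup_S\left[P(S)-e^{\epsilon}Q(S)\right]$ on the half-line where the Gaussian log-likelihood ratio exceeds $\epsilon$, which yields exactly $\Phi\left(-\frac{\epsilon}{\mu}+\frac{\mu}{2}\right)-e^{\epsilon}\Phi\left(-\frac{\epsilon}{\mu}-\frac{\mu}{2}\right)$. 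The one point you leave implicit in the duality step is that restricting to $\epsilon\geq 0$ (rather than all $\epsilon\in\mbR$) suffices only because $G_\mu$ is a \emph{symmetric} trade-off function; you invoke Gaussian symmetry for the reverse-direction tail bound but should also flag it as the hypothesis that makes the ``if and only if'' over $\epsilon\geq 0$ legitimate. With that noted, the plan is sound and matches the argument of the reference the paper relies on.
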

Corollary \ref{cc:GDP_DP} also appears in \citet{balle2018improving} in the context of the calibrating the Gaussian mechanism, but in the presented form the corollary is more readily applicable for our setting.
\begin{theorem}\label{thm:FDAGDP}
    The mechanism $\tilde{h}(D)=h(D)+\sigma Z$ as defined in Theorem \ref{thm:fdp} is $\mu$-GDP with $\sigma\geq \Delta/\mu$. Here $\Delta$ is the same global sensitivity as before.
\end{theorem}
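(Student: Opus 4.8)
The plan is to reduce the infinite-dimensional privacy analysis to a one-dimensional Gaussian comparison and then invoke Corollary \ref{cc:GDP_DP}. By that corollary, establishing $\mu$-GDP is equivalent to showing that the mechanism $\tilde{h}(D) = h(D) + \sigma Z$ is $(\epsilon, \delta(\epsilon))$-DP for every $\epsilon \geq 0$ with $\delta(\epsilon) = \Phi(-\epsilon/\mu + \mu/2) - e^{\epsilon}\Phi(-\epsilon/\mu - \mu/2)$. So I would fix an arbitrary $\epsilon \geq 0$ and an arbitrary adjacent pair $D \sim D'$, and bound the privacy loss between the two output laws $\mcN(h(D), \sigma^2 C)$ and $\mcN(h(D'), \sigma^2 C)$ against this target $\delta(\epsilon)$.

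The first substantive step is to show that the analysis collapses to a scalar problem. Since $\tilde{h}(D)$ and $\tilde{h}(D')$ are two Gaussian process laws differing only through the mean shift $h(D) - h(D')$, and since the compatibility condition of Theorem \ref{thm:fdp} guarantees this shift lies in the RKHS $\mcH$ generated by $C$, the two measures are mutually absolutely continuous and the log Radon--Nikodym derivative is an affine Gaussian functional. I would then compute that, under $\tilde{h}(D)$, this privacy loss is distributed as $N(\nu^2/2, \nu^2)$ with $\nu = \|h(D) - h(D')\|_{\mcH}/\sigma$, so the entire trade-off curve between the two laws is exactly $G_{\nu}$, the same profile as comparing $N(0,1)$ to $N(\nu,1)$. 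This is precisely the content I would draw from Theorem \ref{thm:fdp}: for privacy purposes the functional Gaussian mechanism behaves like a one-dimensional Gaussian mechanism with effective signal-to-noise ratio $\nu$.

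With the scalar reduction in hand, the remainder is monotonicity. The sensitivity bound $\|h(D) - h(D')\|_{\mcH} \leq \Delta$ together with the hypothesis $\sigma \geq \Delta/\mu$ gives $\nu \leq \Delta/\sigma \leq \mu$. Since the profile $\delta_{\nu}(\epsilon) = \Phi(-\epsilon/\nu + \nu/2) - e^{\epsilon}\Phi(-\epsilon/\nu - \nu/2)$ of an effective Gaussian mechanism is increasing in $\nu$ for each fixed $\epsilon$ (equivalently, $G_{\nu}$ is pointwise decreasing in $\nu$), we obtain $\delta_{\nu}(\epsilon) \leq \delta_{\mu}(\epsilon) = \delta(\epsilon)$ for every $\epsilon \geq 0$ and every adjacent pair. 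Hence the mechanism is $(\epsilon, \delta(\epsilon))$-DP for all $\epsilon \geq 0$, and Corollary \ref{cc:GDP_DP} yields $\mu$-GDP.

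I expect the main obstacle to be the scalar reduction rather than the monotonicity bookkeeping. The delicate point is justifying that the privacy loss in the infinite-dimensional space $\mbH$ is exactly Gaussian with the stated parameters: this rests on the absolute-continuity structure emphasized in the setup, namely that the shift $h(D) - h(D')$ lies in $\mcH$ and not merely in $\mbH$, and on identifying the norm in the exponent as the Cameron--Martin norm $\|\cdot\|_{\mcH}$ rather than the ambient $\|\cdot\|_{\mbH}$. Once Theorem \ref{thm:fdp} supplies the tight one-dimensional privacy profile, matching constants to the exact GDP curve $\delta(\epsilon)$ is a direct substitution of $\sigma = \Delta/\mu$ followed by the monotonicity argument above.
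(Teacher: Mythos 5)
Your proposal is correct and follows essentially the same route as the paper: both reduce, via the compatibility-based Radon--Nikodym density over $\mbH$, to the fact that the privacy loss is Gaussian with effective signal-to-noise ratio $\nu = \|h(D)-h(D')\|_{\mcH}/\sigma \leq \Delta/\sigma \leq \mu$, and then invoke Corollary \ref{cc:GDP_DP}. The only cosmetic difference is that you package the final step as monotonicity of $G_\nu$ (equivalently of $\delta_\nu(\epsilon)$) in $\nu$, whereas the paper substitutes the sensitivity bound directly inside the tail probabilities of the privacy loss random variable.
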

\begin{proof}
     We need to show that the distribution induced by our Gaussian process in $\mbH$ of $\tilde{h}(D)$ is $\mu-$GDP by bounding the $\delta(\epsilon)$ in Theorem \ref{thm:fdp} as in Corollary \ref{cc:GDP_DP}. The key here is the use of $\mbH$, as our function space is infinite-dimensional and hence it is not possible to define a measure analogous to that of the Lebesgue measure.
     
    \citet{mirshani2017establishing} provide the framework to define a useful density of $\tilde h(D)$ over $\mbH$,  however they do not consider bounding the tail probabilities nor the privacy loss random variable as we do next. We first define the privacy loss random variable as $PL=\log\left[\frac{dP(D)}{dQ}/\frac{dP(D')}{dQ}\right]$, where $Q$ is the probability measure induced by our noise $Z$ and $P(D)$ the family of measures of our mechanism $\tilde{h}(D)$. By construction we have that $h(D)$ is compatible with $Z$ and hence the above is well defined:
    $$\frac{dP(D)}{dQ}(y)=\exp\left\{ -\frac{1}{2\sigma}\left[\|h(D)\|_{\mcH}^2-2T_{h(D)}(y)\right] \right\}.$$
    Here $T_{h(D)}$ is a linear operator and thus $T_{h(D)}(\tilde h(D))$ is normally distributed.  
    
    From \citet{mirshani2017establishing}, we have that $T_{h(D)}(\tilde h(D)) - 
    T_{h(D')}(\tilde h(D))
    \sim N(0, \|h(D) - h(D')\|^2_\mcH)
    $.
    %
    Using this, we can show that
    \begin{align}
           \delta & :=  P\left(PL\geq\epsilon\right)-e^{\epsilon} P\left(PL\leq-\epsilon\right) \\ 
           &= \Phi\left(-\frac{\epsilon\sigma}{\Delta}+\frac{\Delta}{2\sigma}\right)-e^{\epsilon} \Phi\left(-\frac{\sigma\epsilon}{\Delta}-\frac{\Delta}{2\sigma}\right)\\           
           &= \Phi\left(-\frac{\epsilon}{\mu}+\frac{\mu}{2}\right)-e^{\epsilon} \Phi\left(-\frac{\epsilon}{\mu}-\frac{\mu}{2}\right).
    \end{align}
The last equality is a reparametrization setting $\mu=\Delta/\sigma$ and the rest of the details are in the \ref{ss:proof}.
\end{proof}
For facial radial curves and other disc-like surfaces, by construction, we need to sanitize many mean curves, so we require composition of a multitude of mechanisms with their respective budgets $\mu_i$'s. Composition in the GDP framework is ``tight." 
Given two mechanisms $\tilde{h}_1$, $\tilde{h}_2$ with privacy parameters $\mu_1$ and $\mu_2$, their composition $(\tilde{h}_1,\tilde{h}_2)$ is $\sqrt{\mu_1^2+\mu_2^2}$-GDP \citep[Corallary 3.3]{dong2019gaussian}.  

\section{Face Radial Curve Example}\label{ss:FaceExample}
Here, we apply our proposed methods from sections \ref{ss:FRCP} and \ref{ss:GDPFDA1} on data described in \cite{sero2019facial} and \ref{ss:Data}. Figure \ref{fig:facedetail} displays from left to right the same face with $J=$ 16, 27, and 80 face radial curves. Larger values of $J$ add definition to the face, but $J$ need not be too large to encapsulate the facial structures. Based on that, we represent the $n=1000$ faces with $J=23$ face radial curves; one could pick $J$ in a data driven way as well. 
We treat the $J$ sets of curves across the faces independently. 

\begin{figure}
    \centering
    \begin{tabular}{@{}c@{}@{}c@{}}
        \fbox{\begin{tabular}{@{}c@{} @{}c@{} @{}c@{}}
            \includegraphics[trim = 80 20 70 10, clip, height=1in]{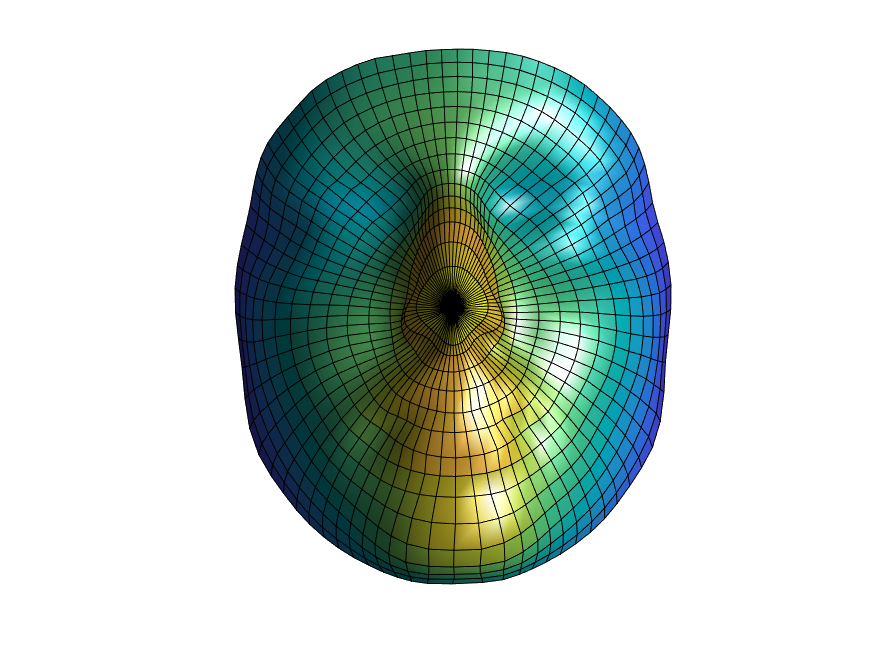} & 
            \includegraphics[trim = 80 20 70 10, clip, height=1in]{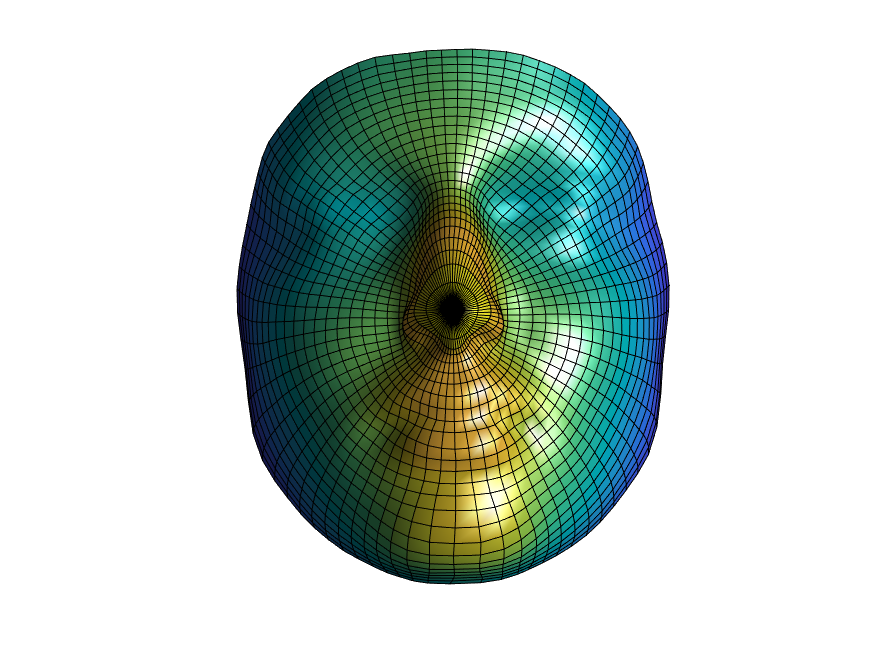} &
            \includegraphics[trim = 80 20 70 10, clip, height=1in]{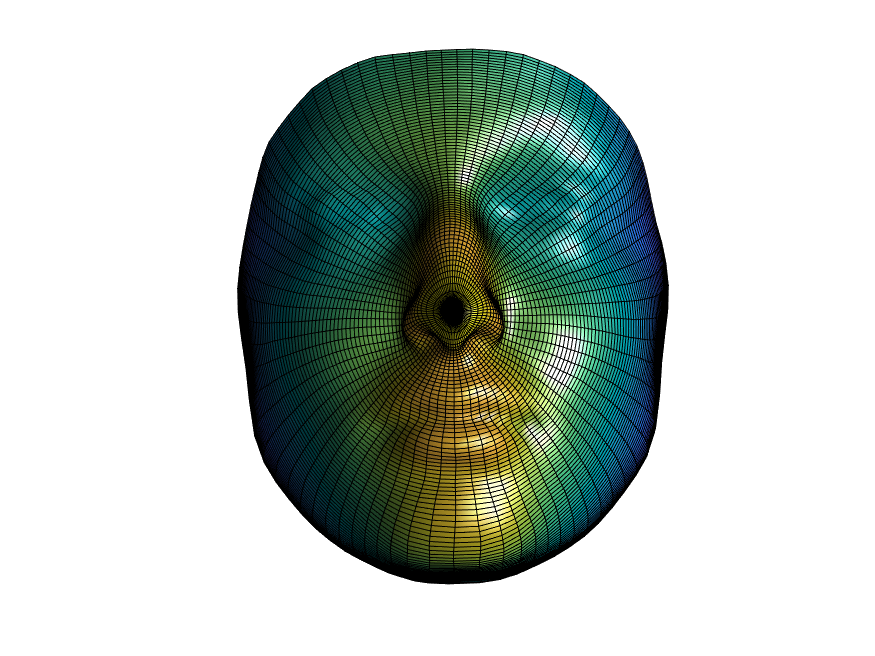} 
        \end{tabular}
        }
         & 
        \begin{tabular}{@{}c@{}}
            \fbox{\includegraphics[trim = 0 0 10 50, clip, width=0.47\textwidth]{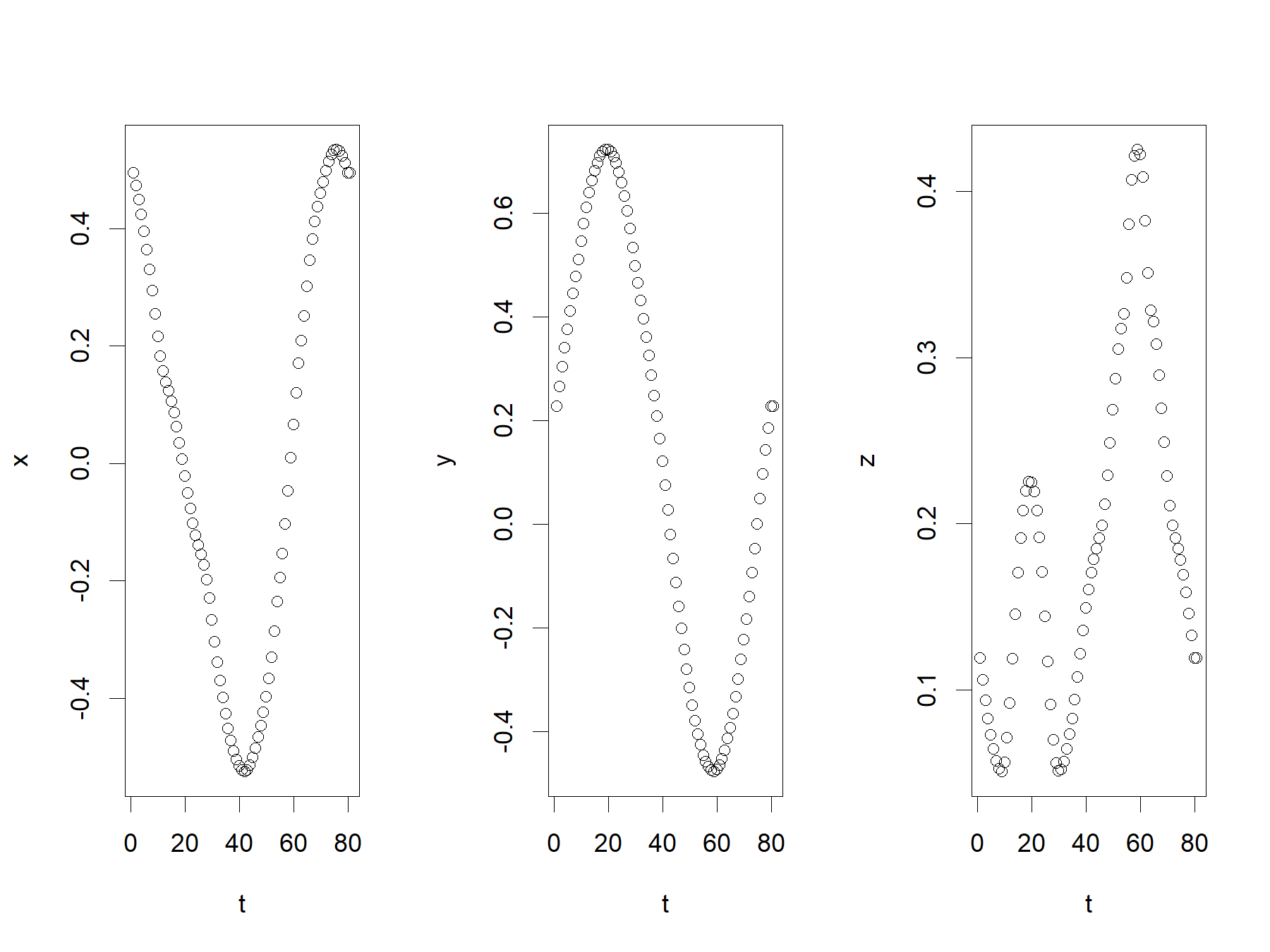}}
        \end{tabular}

    \end{tabular}
    \caption{Left: The same face represented using 16, 27, and 80 face radial curves, respectively. Right: The $x$, $y$, and $z$ coordinate curves, respectively, for a particular face radial curve.}
    \label{fig:facedetail}
\end{figure}

    

We note an important subtle strength in our construction. Each face radial curve has three coordinates, $x,y,z$, and by our construction $x,y$ are effectively a circle and $z$ encode facial features. The right panel of Figure \ref{fig:facedetail} displays the coordinate curves of an example face radial curve. The first two coordinates are quite simple and are roughly one period of a sine curve. We leverage this simplicity in these two coordinate curves to conserve privacy budget by enforcing a larger smoothing parameter as compared to the last coordinate curve and also treat the coordinate curves separately at each radial curve.


Let $\{f_i\}$ be the set of radial curves for a specific coordinate at the $j$th position and $i$ being the index of for particular face. For simplicity one can imagine $j=1$ being the curve nearest the tip of the nose and $j=23$ as the curve nearest the border of the face. Each curve $f_i$ is a closed parameterized curve, $f:\mbS^1\rightarrow \mbR$ where $\mbS^1$ is the unit circle, as in the right panels of Figure \ref{fig:facedetail}. 
In general, we drop the $j$ as we treat each set independent of each other. We parameterize the curve with unit circle but for simplicity we say $f:[0,1]\rightarrow \mbR$ with $f(0)=f(1)$.

By design we have closed curves, so 
to retain this structure we use a periodic kernel that takes the form 
$k(s,t)=\exp \left(-\left[d_{\mbS^1}(\omega(t),\omega(s)) / \rho \right]^\alpha\right)$ \citep{gneiting2013strictly} with $t,s\in[0,1]$. The distance of two points $a,b$ on $\mbS^1$ is $\arccos\langle a,b \rangle$, however, the previous $t,s$ are parameters on the unit interval, so we first ``wrap" the interval around the circle to compute this distance hence the need for $\omega(\cdot)$, a wrapping function. This kernel is a powered exponential on spheres with kernel range parameter $\rho$ and smoothness parameter $\alpha\in(0,1]$. For our experiments we set $\alpha=1$ as we can control smoothness instead via $\phi$ in the mean estimation as in \ref{ss:DP}.

We compute kernel matrix $K$ of the the unit interval $[0,1]$ with $[K]_{ij}=k(\omega(i/m),\omega(j/m))$ where $m=80$ is an integer defining the fineness of the uniform grid on the unit interval and $i,j=0,1,\dots m$. Let the eigenvalues, of $K$, and their associated eigenfunctions be denoted as $(\lambda_i,b_i)$. It can be shown that non-private mean in the RKHS space is $h(D)=\frac{1}{n}\sum_i \sum_j\frac{\lambda_j}{\lambda_j+\phi}\langle f_i,b_j\rangle b_j$. 
Since we are sanitizing the coordinates independently, we have a smoothness parameter for each coordinate, $\phi_x$, $\phi_y$, and $\phi_z$. The left panel of Figure \ref{fig:meanfaces} displays the mean face constructed from mean face radial curves with $\phi_x=\phi_y=0.01$ and $\phi_z=0.005$.
\begin{figure}
    \centering
    \begin{tabular}{|@{}c@{} @{}c@{}|@{}c@{} @{}c@{}|}
    \hline
        {\includegraphics[trim = 50 20 30 0, clip, height=1.3in]{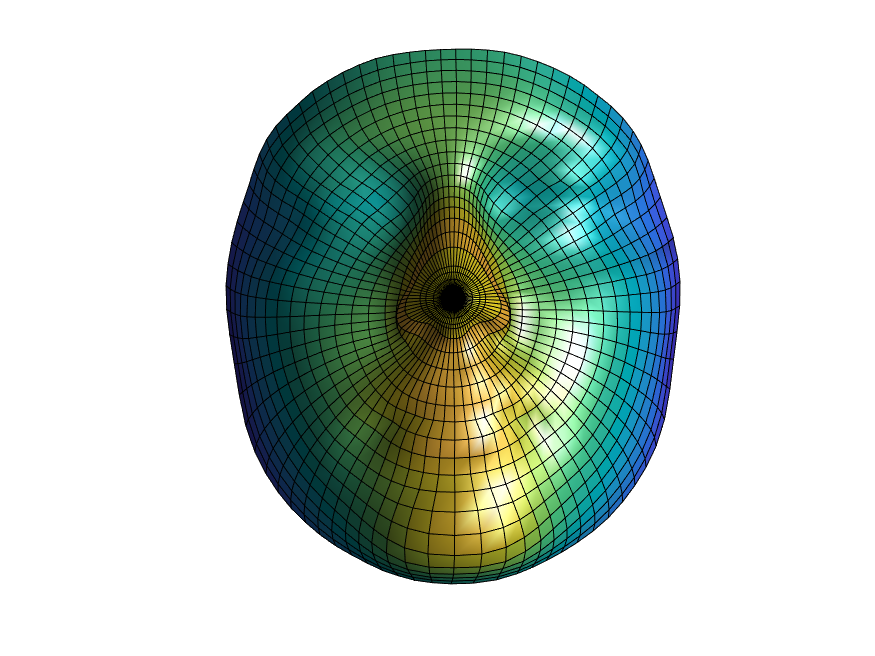}}&  
        {\includegraphics[trim = 100 40 110 20, clip, height=1.3in]{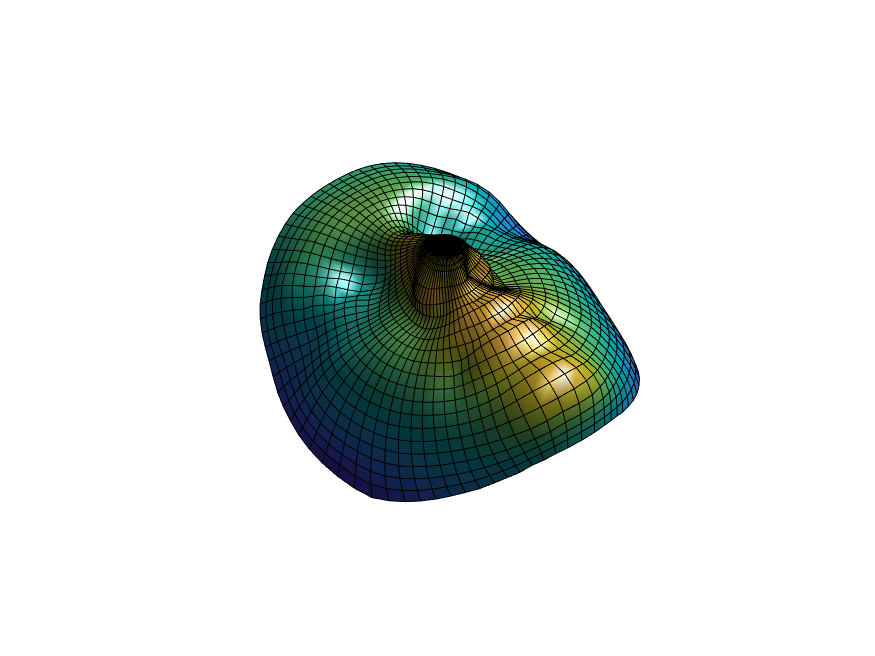} } &
        {\includegraphics[trim = 100 0 60 0, clip, height =1.3in]{pointwisemeanAng1}} &
        {\includegraphics[trim = 100 50 100 50, clip, height =1.3in]{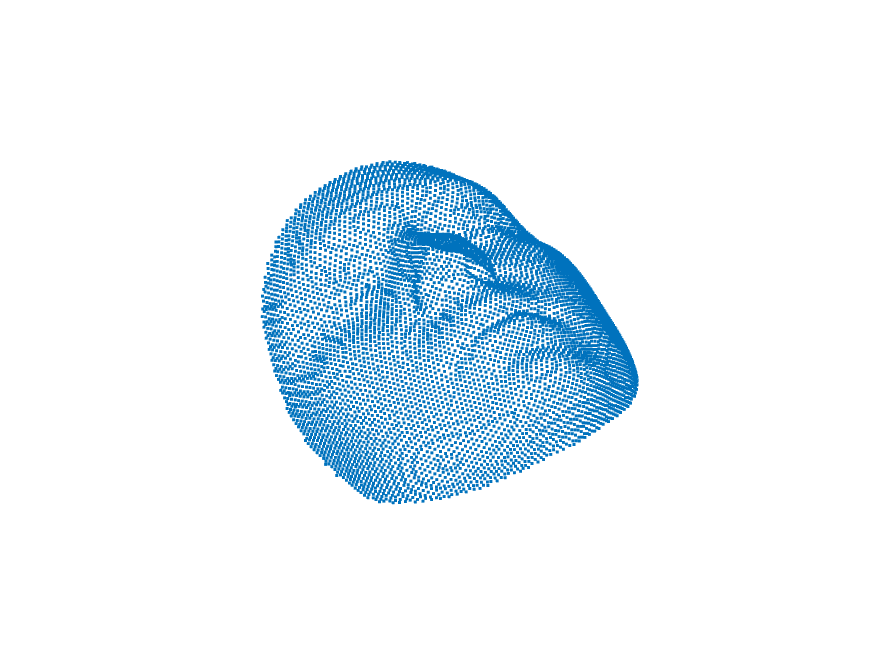}} \\\hline
        
    \end{tabular}
    
    \caption{Left: Two angles of the average face constructed of average face radial curves with $\phi_x=\phi_y=0.01$ and $\phi_z=0.005$. Right: Two angles of the point-wise mean.}
    \label{fig:meanfaces}
\end{figure}

\subsection{Experimental Results}\label{ss:ExRes}
To highlight the difficulty of working with such complex data we present some preliminary results. We have 1000 faces each with 7150 points; we fully describe the data in \ref{ss:Data}. The points are ``registered", via the data collection method, across all the faces. Without this registration, the following would not be immediately feasible. Let $D=\{X_i\}$ and $X_i\in\mbR^{7150\times 3}$. 
We first compute the point-wise mean of the faces, $\bar{X}=1/n\sum_i X_i$ and display this in the right panel of Figure \ref{fig:meanfaces}.

Suppose we have a total budget of $\mu_T$ and we wish to sanitize, independently, the 7150 points each of which has 3 coordinates. To split the budget evenly, each point's coordinate is allocated $\mu_p:=\sqrt{\mu_T^2/(7150\cdot 3)}$ of the budget. 
For each point and each coordinate we compute sensitivity as $\Delta_{k,l}=\max_{i,j}|X_i[k,l]-X_j[k,l]|$. That is, among all faces this measures the variability at each point. We note that this sensitivity calculation does violate privacy, as it is data driven, and that an entirely private form of this calculation is strictly larger. That is, our sensitivity calculation is smaller, so a private method would lead to noisier estimates. We add noise to each coordinate of each point as $\bar{X}[k,l]+\Delta_{k,l}/\mu_p \cdot z$ where $z\sim \mcN(0,1)$. Figure \ref{fig:pointexamples} displays two angles of a point-wise private face in the left panel with total privacy budget $\mu_T=3$ with more results in \ref{ss:AddRes}. 


Next we apply our approach using face radial curves. We have $J$ many sets of curves, $\{f_i\}_j$ each of which has three coordinate curves, 
$f_i=(f_{ix},f_{iy},f_{iz})$.
We sanitize the regularized mean, independently, for each coordinate and each radial curve, using our mechanism which satisfies GDP with sensitivity is $\Delta^2\leq 4\tau^2/(n^2\phi)$. Earlier we saw that the $x$ and $y$ coordinate curves, at every $J$, are effectively one period of a sinusoidal curve, we take advantage of this construction to conserve budget. We thus spend less budget sanitizing the $x$ and $y$ coordinate curves and spend more on $z$ as they contain more feature information. 

\begin{figure}[t]
    \centering
    \begin{tabular}{|@{}c@{} @{}c@{}|@{}c@{} @{}c@{}|}
    \hline
        
        
        {\includegraphics[trim = 100 0 60 0, clip, height =1.3in]{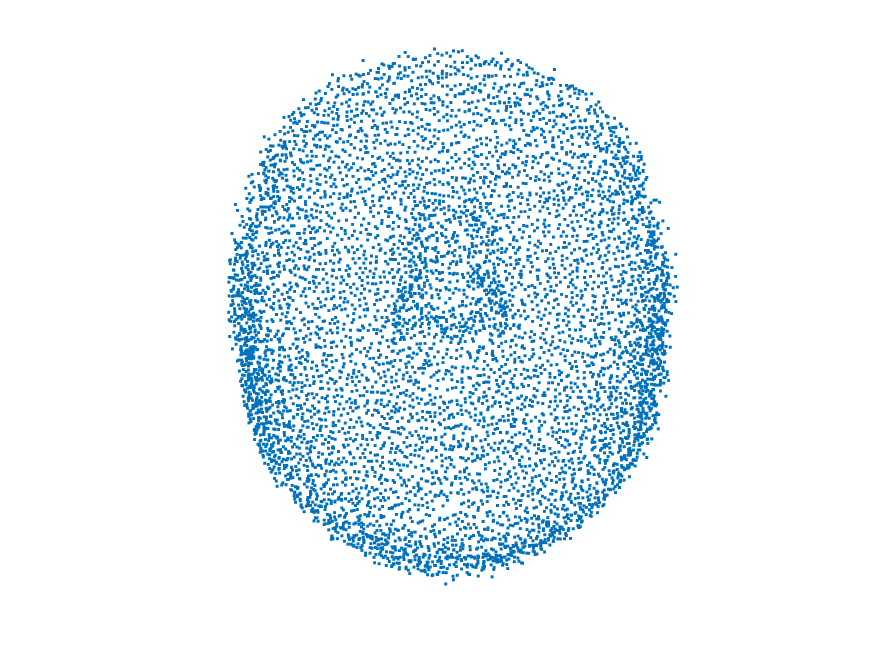}} &
        {\includegraphics[trim = 100 50 100 50, clip, height =1.3in]{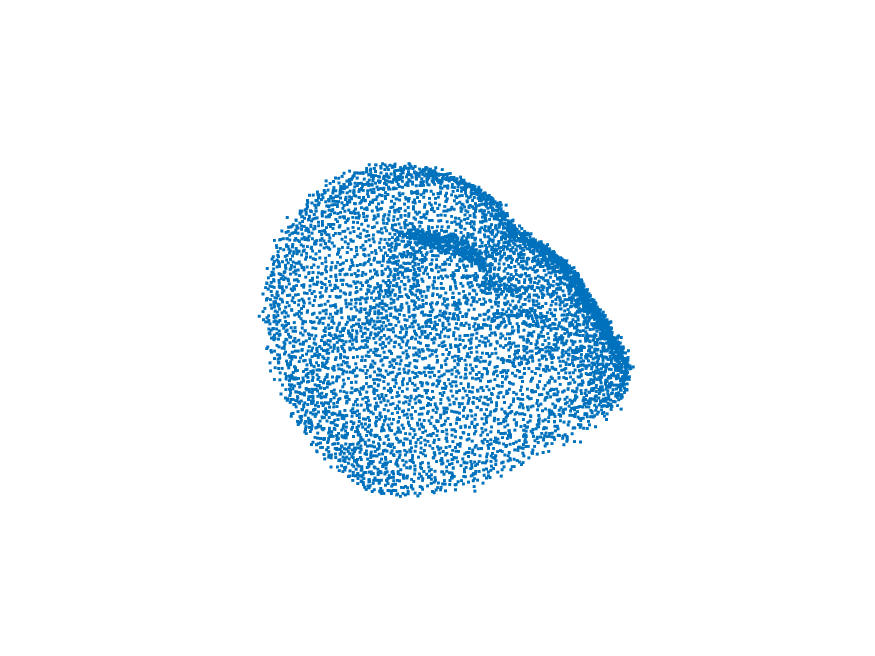}} &  
        
        \includegraphics[trim = 50 20 30 0, clip, height=1.3in]{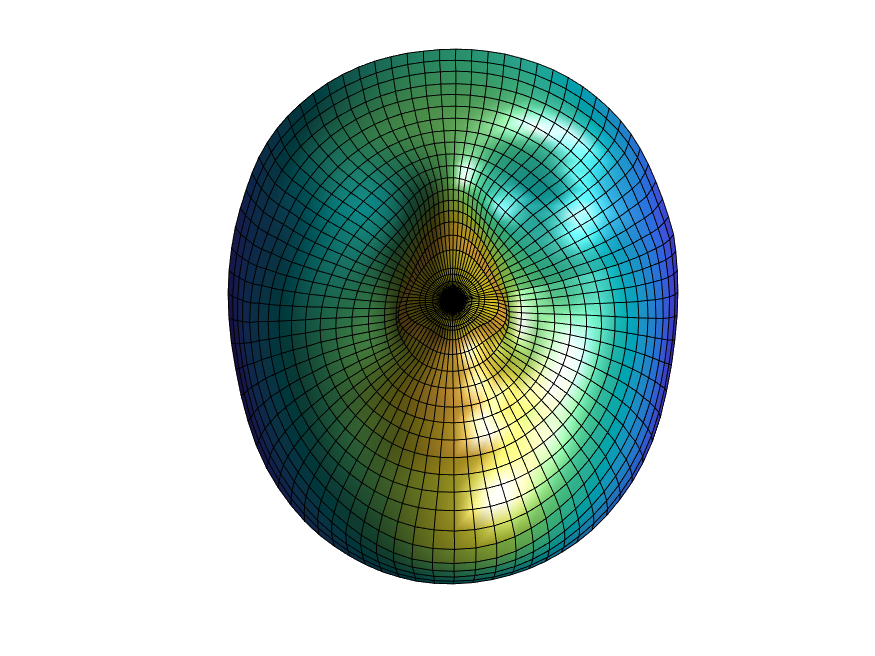} &    
        
        \includegraphics[trim = 100 40 110 20, clip, height=1.3in]{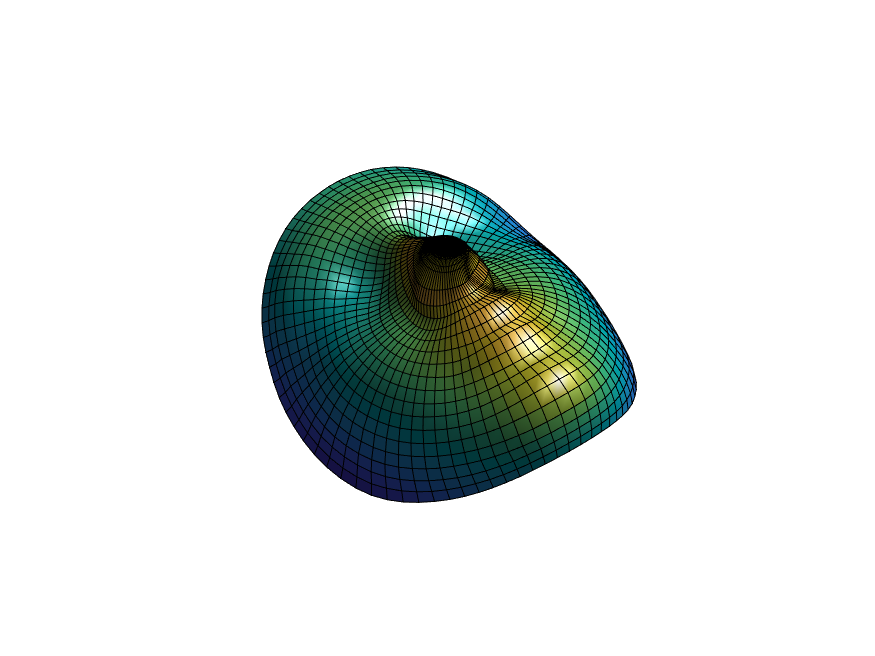} \\\hline
    \end{tabular}
    
    \caption{Left: Two angles of a private face sanitized with point wise Gaussian noise with total $\mu_T=2$. 
    Right: Two angles of a differentially private face constructed from private mean face radial curves with $\mu_T=2.9961$ $\phi_x=\phi_y=0.01$ and $\phi_z=0.005$.}
    \label{fig:pointexamples}
\end{figure}

We need to determine the $\tau$, an upper bound on the norm of the curves, in our $\Delta$. We determine this $\tau$, in a similar way as the point-wise approach of computing sensitivity, through the data at each set of radial curves and each coordinate. That is, for each $j$ and at each coordinate $w$, $\tau_w=\max_i\|f_i(w)\|_\mbH$. Again, we have a sensitivity that is partially data driven. However we note since we use shape analysis to do our processing, all the faces are scaled to have unit surface area, and hence all the curves are scaled. Thus, individually none of these norms hold any meaningful size information. 
Further, our analysis is entirely \textit{scale invariant}, meaning that if a face is much larger or smaller than average, that information is completely removed when we scale. This is a strength of shape analysis as the emphasis is the \textit{shape} of the face and not the nuisance parameters. 

We choose a budget for each of the coordinates $\mu_x,\mu_y,\mu_z$ which are the same at all levels $j$ of curves; i.e., at each $j$ for the $w$ coordinate curve $\tilde{h}(\{f_{iw}\})=h(\{f_{iw}\})+\frac{\Delta_{w}}{\mu_w}Z$ with $\Delta_w^2\leq 4\tau_w^2/(n^2\phi)$ and $Z$ is a standard normal Gaussian process. The right panel of Figure \ref{fig:pointexamples} displays an example private mean face with $\mu_T=2.9661$ where $\mu_T=\sqrt{J(\mu_x^2+\mu_y^2+\mu_z^2)}$ with $J=23$, $\mu_x=\mu_y=0.2$ and $\mu_z=0.55$. For each coordinate we have separate smoothing parameters we display results with $\phi_x=\phi_y=0.01$ and $\phi_z=0.005$. Comparing our private mean to left panel of Figure \ref{fig:meanfaces} we see there is some clear smoothing in the lips area and some subtle smoothing in the eyes as well. 

Lastly, we quantify the amount of injected noise by computing a mean squared error. We let the point-wise mean be our baseline non-private estimate, the right panel of Figure \ref{fig:meanfaces}. 
For the point-wise private estimate $\tilde{X}$ 
the $MSE=\frac{1}{N}\sum_i |\bar{X}_i-\widetilde{X}_i|^2$ where $N=7150$ is the number points. Table \ref{tab:Error} displays these errors in the first two columns. Our sanitized mean, however, is a set of functions so we first discretize it into 1863 points (23 curves at 81 points) and denote this as $\widetilde{V}$. We let $MSE = \frac{1}{M}\sum_j \min_i|\bar{X}_i-\widetilde{V}_j|^2$ with $M=1863$. Since the two point clouds have different number of points, this MSE finds the nearest point in the non-private face to that of the private face $\widetilde{V}$. We further scale $\widetilde{V}\rightarrow a\widetilde{V}$ to align it to $\bar{X}$ as 
$\bar{X}$ is not processed data. Table \ref{tab:Error} displays the error in the third column of 5.2989, less than both of the point-wise private faces. This MSE does incur an inflation, though, as the point-wise mean is not a surface and thus lacks a registered point at the location of our private mean. We can clearly see this in Figure \ref{fig:ErrorFig}, in both panels the blue points are the point-wise mean, the left panel has the point-wise private mean in red, and the right panel has our private (discretized) mean in red. We see that our method has points that seem to lie on the ``surface" of the face but the non private mean may not have a point there. We also clearly see that the point-wise private mean adds noise in the ambient space and thus creates a rough, fuzzy estimate not entirely resembling a smooth face; i.e., the facial structures are distorted. Also, the MSE of our facial radial curve based GDP method is less than the MSE of the point-wise sanitized estimate which has a slightly larger privacy budget.
To even further emphasize this point, the last column of Table \ref{tab:Error} is the MSE between our private mean and the RKHS mean, we see it is less than half in comparison to the point-wise sanitization.

\begin{table}[]
    \centering
    \caption{Mean Squared Error between private estimates and the point-wise mean. The last column is the point-wise MSE between our method and the RKHS mean. All values are at E-04.}
    \begin{tabular}{|c|c|c|c|c|}
    \hline
    & \multicolumn{4}{|c|}{$\mu_T$}\\ \hline
          &  $2$ &  $3$ &  $2.9961$ (Ours)& Ours*\\\hline
        MSE       &  14       &   7.5807     &  5.2989       & 3.6365\\\hline
    \end{tabular}
    
    \label{tab:Error}
\end{table}
    

\begin{figure}
    \centering
    \begin{tabular}{@{}c@{} @{}c@{} }
        {\includegraphics[trim = 40 75 20 50, clip, height = 1.2in]{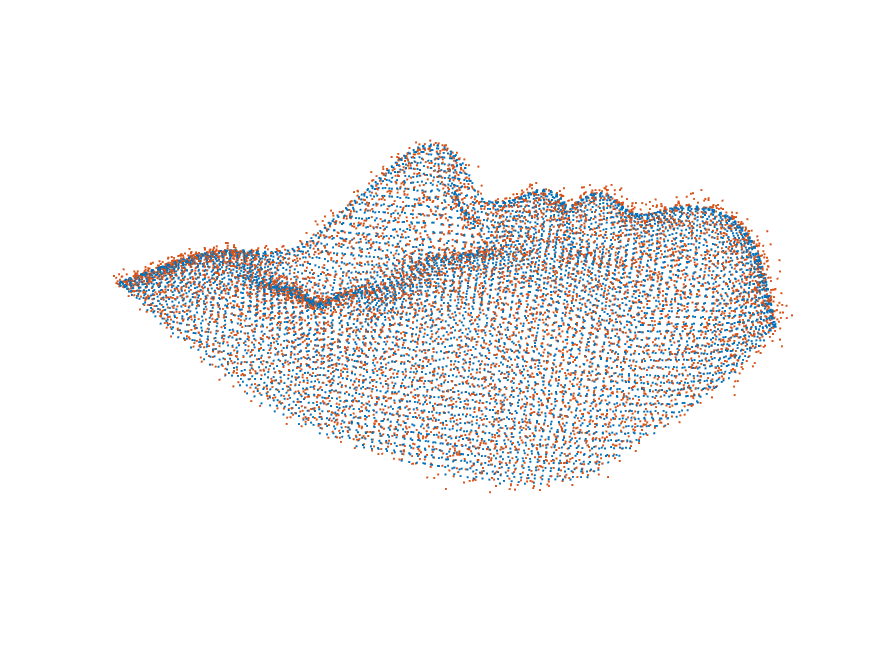}} &
        \includegraphics[trim = 40 75 20 50, clip, height = 1.2in]{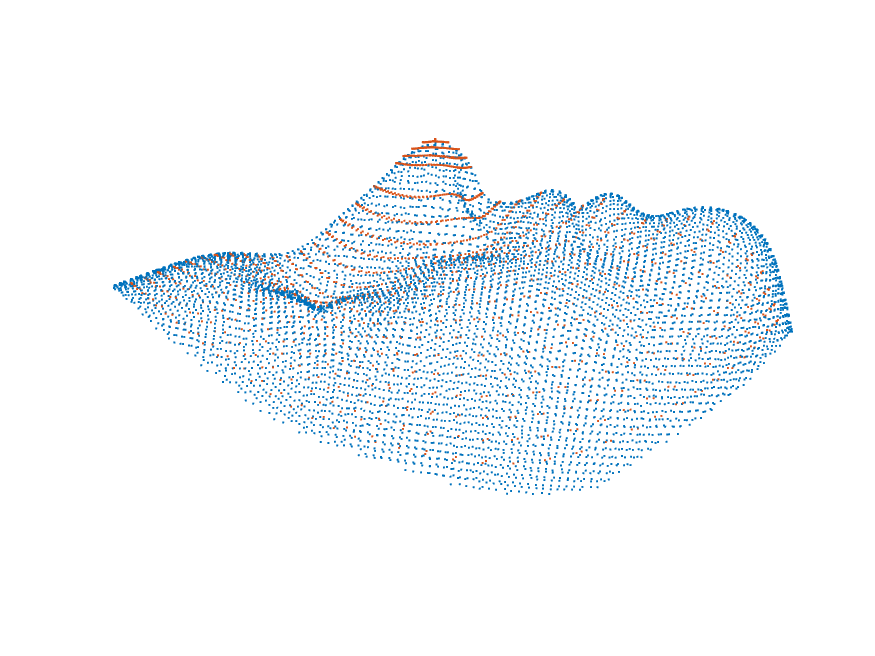}
    \end{tabular}
    \caption{Blue points are the point-wise mean. Left: The red points are the point-wise private mean with $\mu_T=3$. Right: The red points are our private mean with $\mu_T=2.9961$.}
    \label{fig:ErrorFig}
\end{figure}

\section{Conclusions}
We have developed a framework for releasing a $\mu$-GDP human face with our novel face radial curve representation. We extended $(\epsilon,\delta)$-DP FDA \citep{mirshani2017establishing} techniques into the  $\mu$-GDP framework \citep{dong2019gaussian} to take advantage of its tight composition of budgets. We utilized the new $\mu$-GDP FDA framework specifically for faces but note this is applicable for other FDA applications. Further, we utilize the shape analysis techniques of \citet{samir2006three,drira2010pose,jermyn2017elastic} to create a novel curve representation of a face. We focus on human faces, however, one can use our framework to sanitize other surfaces which are diffeomorphic to a disk.
Our representation changes the problem of sanitization from needing thousands of point-wise estimations to a few dozen functional estimates. We discuss the limitations of our methods in \ref{ss:LimitBreak}.

A persons face contains information of ones identity and, as noted earlier, its image contains genomic information \citep{venkatesaramani2021re}, thus, the need to protect this information while sharing these data (e.g., anthropological studies) is crucial.
We demonstrate via empirical and quantitative results that our methodology adds less noise and preserves the structure of the face. We chose the number of face radial curves needed for our representation, but one could develop a private cross-validation method to do so. 

In \ref{ss:Data} we describe how the data is collected by \cite{sero2019facial}, so here we mention some limitations in our experimental results as well as in \ref{ss:LimitBreak}. In a sense, our data is very clean, and hence our methodology is only tested in this circumstance. All faces in the dataset have a neutral expression, so to handle other expressions would require additional processing or extending the methodology. Expression variation is a complex source of variability which is critical in areas such as face recognition \citep{6101586}. 
It is not entirely clear how to handle such variability for estimation of an average.
The parametrization requires a genus-0 surface, i.e. no holes, so if a face has missing data this would need to be rectified. For instance, preprocessing by patching ``holes" or missing data has been considered by \cite{passalis2011using} which leverage the symmetry of the face for interpolation. 
There is a vast amount of survey papers in the face cognition literature such as \cite{zhao2003face,jafri2009survey,li2020review}, to name a few, which encounters similar issues all to say these problems are not trivial and are potential future research opportunities.



%

\subsubsection*{Acknowledgments}
The authors would also like to thank the anonymous reviewers which provided constructive feedback. CS would also like to thank Gary Choi for guidance on conformal maps. This work was in part supported by NSF Award No. SES-1853209 to the Pennsylvania State University, and by Huck Institutes of the Life Sciences at the Pennsylvania State University through the Dorothy Foehr Huck and J. Lloyd Huck Chair in Data Privacy and Confidentiality. Content is the responsibility of the authors and does not represent the views of the Huck Institutes. 

\bibliography{ref}
\bibliographystyle{iclr2025_conference}

\appendix
\section{Appendix}

\subsection{Limitations}\label{ss:LimitBreak}
We do not collect any data, however, in \ref{ss:Data} we describe the data and how it was collected. In many regards our data is clean, so we did not test our methods on noisy data. In a sense, we can consider our data as processed data, so noisy data would require some processing which we do not consider in this work. It is possible that if the data is more noisy then it would require further smoothing which our method does allow. We further have an assumption of a genus-0 surface (a face with no holes) which may be an issue if there are missing data points. This type of data would require additional pre-processing. We also only test our method on one dataset; one could encounter the issue of variability of noise in other datasets but this can be overcome with the smoothness parameters in our method. 

A limitation which we have already mentioned in \ref{ss:ExRes} is the method for computing sensitivity, $\Delta$, in both our method and the point-wise method which we compare ourselves to. This, however, is an inherent issue in differential privacy rather than our method. That is, for both our method and the point-wise private estimate, we use the data to estimate the sensitivity. Ideally one could use a second dataset or a training set to estimate the many sensitivity values needed in a private way.

In terms of limitations of privacy and fairness, these are the exact problems we address. As mentioned in \ref{s:intro}, one may want to release private average faces for demographics. Our methods are intended for facilitate this goal although we do not consider variability of sample sizes by demographics.
\subsection{Experiments Compute Resources}\label{ss:Compute}
We ran experiments almost exclusively on a desktop computer with an Intel i7 processor and 32GB of RAM on Windows 11. Creating a triangulated mesh for a face takes about 5 seconds on the desktop, so this was done locally for each face. Parameterizing each face costs about 20 seconds on the desktop, this was also done locally. Reparameterizing, however, each face to a template costs upwards of 3 minutes on the desktop, so since we have 1000 faces we did this on a supercomputer. 

Computing the mean and privatization with our method costs about 3 total minutes on the desktop computer. The mean computation and privatization of the point-wise method only requires about 1 minute of computational time on the desktop.

\subsection{Reparameterization}\label{ss:reg}
To optimally register two surfaces we use the methods as described in \citet{jermyn2017elastic} which we summarize next.
Suppose we have two surfaces $f_1,f_2\in \mcF$  which we wish to optimally register over the parameterization group $\Gamma$.
Here $\mcF$ is as in \ref{ss:esa} $\mcF\ni f:D\rightarrow \mbR^3$ where $D$ is the unit disk.
The action of $\Gamma$ on a surface is right composition, $f\circ\gamma$ for $\gamma\in\Gamma$, and reparameterizes the surfaces but does not change its image which in our case is the face. 
The authors in \citet{jermyn2017elastic} leverage a transformation of $f$ referred to as the square-root normal field (SRNF) defined as $q=\frac{n}{|n|^{1/2}}$ where $n$ is the normal vector $n=\frac{\partial f}{\partial u}\times\frac{\partial f}{\partial v}$. The corresponding action of $\Gamma$ on $q$ is then $(q,\gamma):=\sqrt{J(\gamma)}q\circ\gamma$. This transformation is necessary for an isometric action of $\Gamma$ but the details of this are lengthy and not necessary for our application.

The authors define an energy function $E_{reg}:\Gamma\rightarrow\mbR_{\geq0}$ to implement an iterative gradient descent method for the registration. The energy is defined as $E_{reg}(\gamma) = \|q_1-(\tilde{q}_2,\gamma)\|^2$ where $\tilde{q}_2$ is the ``current" stage of the surface being reparameterized.
Here $\gamma$ 
is the incremental reparameterization with
$\tilde{q}_2=({q}_2,\gamma)$.
Here $q_1$ would be our template surface from \ref{ss:esa} and we register all surfaces to this template. 

Since the reparameterization is done in an \textit{iterative} manner it results in incrementally improved registration. This method is iterative as the gradient is taken about the identity of $\Gamma$, $\gamma_{identity}$ and thus lives in the tangent space of this element.
Suppose we have an orthonormal basis $\mcB=\{b\}$ for the tangent space of the $\Gamma$ at the identity, $T_{id}(\Gamma)$ where each $b$ is a unit ``vector" in the tangent space $T_{id}(\Gamma)$.
The full gradient is given by 
$\sum_{b_i\in\mcB}\langle q_1 -\tilde{q}_2,d  (b_i,\gamma) \rangle_2 b_i$ and we use the implementation mentioned in \ref{ss:esa} and refer the interested reader to the cited materials.

This reparameterization is costly due to the dimension, iterative construction, and reliance on a basis. To alleviate some expense, we set the center of the surface as the tip of the nose as noted in \ref{ss:Mobius} with the triangulated mesh before we impose the disk parameterization. The reparameterization iterative method thus has to search over a smaller space as the center of the disk is pre-registered. 
  

\subsection{Data Details}\label{ss:Data}
For a full description of how the data is collected we refer to \cite{sero2019facial} but we summarize and emphasize the relevant points. All participants have the same neutral face expressions, so the data is not heterogeneous in terms of facial expression. Each face is captured with 7150 points which the authors refer to as quasi-landmarks. Further, these landmarks are registered across all individuals which \cite{sero2019facial} refer to as ``homologous." This registration is a point-wise correspondence across faces while the registration in \ref{ss:reg} is an entire disk correspondence. 
Similar to our approach the faces are aligned using Procrustes Analysis, which finds the optimal rotation but again considering the faces as a set of points not a disk-like surface.

\subsection{Supplemental Notes on DP}\label{ss:DPextra}
We first present the definition for approximate differential privacy.
\begin{definition}[\citep{dwork2006calibrating}]
Let $D\sim D'$ and $\tilde{h}(D)$ be a random privacy mechanism, the mechanism is said to achieve approximate differential privacy, $(\epsilon,\delta)$-DP, for some  $\epsilon>0$ and $0<\delta<1$, if it satisfies the probabilistic inequality
$$P(\tilde{h}(D)\in A)\leq e^\epsilon P(\tilde{h}(D')\in A)+\delta,$$
for any measurable set $A$.
\end{definition}

Both $\epsilon$ and $\delta$ are pre-specified parameters referred to as the privacy budget. When $\delta=0$, this is referred to as \textit{pure} differential privacy. Differential privacy is an attribute of the random mechanism and roughly states that the distributions over adjacent datasets are not too different. To achieve approximate DP for functional data \cite{mirshani2017establishing} establish the following mechanism.
\begin{theorem}[\citet{mirshani2017establishing}]\label{thm:fdp}
    Let $h(D)$ be a functional summary of a dataset $D$ that is compatible with standard Gaussian process noise $Z$ and $\epsilon\leq 1$. We have that $\tilde{h}(D):=h(D)+\sigma Z$ achieves $(\epsilon,\delta)-$DP over $\mbH$ where $\sigma \geq \frac{2\log(2/\delta)}{\epsilon^2}\Delta^2 $. Here $\Delta$ is the \textit{global sensitivity} of the summary $h(D)$ and $\Delta^2=\text{sup}_{D\sim D'}\|h(D)-h(D')\|_{\mcH}^2$ where the norm is over the 
space of the noise $Z$.
\end{theorem}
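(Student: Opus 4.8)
The plan is to reduce the infinite-dimensional $(\epsilon,\delta)$ guarantee to a one-dimensional Gaussian tail bound on a privacy loss random variable, via the Cameron--Martin theorem. First I would fix adjacent datasets $D\sim D'$ and write $Q$ for the law of the noise $\sigma Z$ on $\mbH$, $P_D$ for the law of $\tilde{h}(D)=h(D)+\sigma Z$, and $P_{D'}$ for the law of $\tilde{h}(D')$. Because $h(D)$ is \emph{compatible} with $Z$ — that is, both $h(D)$ and $h(D')$ lie in the reproducing kernel (Cameron--Martin) space $\mcH$ determined by the covariance of $Z$ — the shifts are admissible, so by the Feldman--H\'ajek dichotomy $P_D$ and $P_{D'}$ are each \emph{equivalent} to $Q$ rather than mutually singular, and the Radon--Nikodym derivative is the Cameron--Martin density $\frac{dP_D}{dQ}(y)=\exp\{-\frac{1}{2\sigma}[\|h(D)\|_{\mcH}^2-2T_{h(D)}(y)]\}$, with $T_{h(D)}$ the (linear) Paley--Wiener functional associated to $h(D)$. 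Establishing this equivalence, and thereby giving meaning to the likelihood ratio in the absence of a Lebesgue base measure, is the conceptual heart of the argument and the step I expect to be the main obstacle; once it is in place the remaining estimates are classical.

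Second, I would form the privacy loss random variable $L(y)=\log\frac{dP_D}{dP_{D'}}(y)$. Dividing the two Cameron--Martin densities, the quadratic normalizing terms combine and linearity of $T$ gives $L(y)=T_{h(D)-h(D')}(y)/\sigma-\frac{1}{2\sigma}(\|h(D)\|_{\mcH}^2-\|h(D')\|_{\mcH}^2)$, which is a deterministic constant plus the Gaussian linear functional $\tfrac{1}{\sigma}T_{h(D)-h(D')}$ and hence univariate Gaussian under $P_D$. Using the Paley--Wiener isometry recorded from \citet{mirshani2017establishing}, namely that the noise part of $T_{h(D)}(\tilde{h}(D))-T_{h(D')}(\tilde{h}(D))\sim N\!\left(0,\|h(D)-h(D')\|_{\mcH}^2\right)$, a direct mean/variance computation exactly as in the finite-dimensional Gaussian mechanism yields $L\sim N(v^2/(2\sigma),\,v^2/\sigma)$, where $v=\|h(D)-h(D')\|_{\mcH}$ and $\sigma$ is the paper's variance-scale parameter. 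Consequently $P_D(L>\epsilon)=\Phi\big((v^2/(2\sigma)-\epsilon)/(v/\sqrt{\sigma})\big)$.

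Third, I would run the standard ``bad event'' decomposition to connect this tail to the calibration. For an arbitrary measurable $A\subseteq\mbH$, split on $B=\{y:L(y)>\epsilon\}$: on $B^c$ the likelihood ratio is at most $e^\epsilon$, so $P_D(A)=P_D(A\cap B^c)+P_D(A\cap B)\le e^\epsilon P_{D'}(A)+P_D(B)$, which is exactly the $(\epsilon,\delta)$ inequality provided $P_D(B)\le\delta$. It remains to enforce $P_D(L>\epsilon)\le\delta$. The tail $\Phi\big((v^2/(2\sigma)-\epsilon)/(v/\sqrt{\sigma})\big)$ is monotone increasing in $v$, so replacing $v$ by $\Delta=\sup_{D\sim D'}\|h(D)-h(D')\|_{\mcH}$ is the worst case and, once satisfied, holds uniformly over all adjacent pairs. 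Substituting $\Delta$, invoking $\epsilon\le1$ to absorb the lower-order correction term $-\Delta/(2\sqrt{\sigma})$, and inverting the Gaussian tail inequality $\Phi(-t)\le e^{-t^2/2}$ gives the sufficient condition $\epsilon\sqrt{\sigma}/\Delta\gtrsim\sqrt{2\log(2/\delta)}$, i.e. $\sigma\ge\frac{2\log(2/\delta)}{\epsilon^2}\Delta^2$, which is the stated calibration. Collecting the three steps delivers $(\epsilon,\delta)$-DP for every adjacent pair. The only genuinely delicate point, as flagged, is the first step: one must verify that compatibility places $h(D)-h(D')$ inside $\mcH$ so that $T_{h(D)-h(D')}$ is a well-defined square-integrable linear functional, since otherwise the densities — and hence $L$ — fail to exist in infinite dimensions.
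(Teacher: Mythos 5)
Your reconstruction is correct, but note the point of comparison: the paper does not prove Theorem~\ref{thm:fdp} at all --- it is imported verbatim from \citet{mirshani2017establishing} as background --- so the only in-paper analogue is the proof of the GDP extension (Theorem~\ref{thm:FDAGDP}, details in~\ref{ss:proof}), which runs on exactly the machinery you rebuilt: compatibility places $h(D)$, $h(D')$, and hence their difference in the Cameron--Martin space $\mcH$, giving equivalence of the shifted measures with $Q$ and the stated density; the privacy loss is then univariate Gaussian, and your mean/variance computation $L\sim N\!\left(v^2/(2\sigma),\,v^2/\sigma\right)$ with $v=\|h(D)-h(D')\|_{\mcH}$ rests on the same polarization identity $\|h(D)\|_{\mcH}^2-\|h(D')\|_{\mcH}^2=2\langle h(D)-h(D'),h(D)\rangle_{\mcH}-\|h(D)-h(D')\|_{\mcH}^2$ that the paper uses in~\ref{ss:proof}. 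The genuine difference is the final step. You close with the classical one-sided decomposition $P_D(A)\le e^{\epsilon}P_{D'}(A)+P_D(L>\epsilon)$, monotonicity in $v$ to pass to the worst case $v=\Delta$, and a Gaussian tail bound with $\epsilon\le 1$; this is precisely where the conservative constant $2\log(2/\delta)$ in the theorem's calibration comes from, and it suffices for the statement as given. The paper instead works with the exact two-sided quantity $\delta(\epsilon)=P(PL\ge\epsilon)-e^{\epsilon}P(PL\le-\epsilon)$, which is tight, needs no restriction on $\epsilon$, and is what enables the conversion to $\mu$-GDP via Corollary~\ref{cc:GDP_DP}; your route buys the stated $(\epsilon,\delta)$ guarantee more simply, the paper's buys exactness and the GDP characterization. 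One cosmetic point in your favor: you consistently read $\sigma$ as a variance-scale parameter, which matches both the $1/(2\sigma)$ in the density and the calibration $\sigma\ge 2\log(2/\delta)\Delta^2/\epsilon^2$, whereas the paper's appendix silently switches between $1/(2\sigma)$ and $1/(2\sigma^2)$; flagging your convention, as you did, is the right call.
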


Assuming compatibility, let $Q$ denote the probability measure induced by $Z$ over $\mbH$ and $\{P(D):D\in\mcD\}$ denote the family of measures over $\mbH$ induced by $\tilde{h}(D)$ as in Theorem \ref{thm:fdp}. The density of $\tilde{h}(D)$ over $\mbH$ with respect to $Q$, which exists if compatability holds, takes the form 
$$\frac{dP(D)}{dQ}(y)=\exp\left\{ -\frac{1}{2\sigma}\left[\|h(D)\|_{\mcH}^2-2T_{h(D)}(y)\right] \right\},$$ 
$Q$ almost everywhere with $T_{h(D)}(y)=\langle h(D),y\rangle_\mcH$.
The inner product on $\mcH$ can be defined in terms of eigenvalues ($\lambda_i$) and eigenfunctions ($b_i$), as $\langle x,y\rangle_\mcH=\sum_i\lambda_i^{-1} \langle x,b_i\rangle_{\mbH}\langle y,b_i\rangle_{\mbH}$ but can generally be expressed using any basis.
\subsection{Private Point-wise Face}
Here we have more details on the benchmark method. We have $n=1000$ faces each with 7150 points. Let $D=\{X_i\}$ and $X_i\in\mbR^{7150\times 3}$ be the set of faces; further let $X_i[k,l]$ be the $lth$ coordinate, ($x,y,z$), of the $kth$ point of the $ith$ face, $l\in\{1,2,3\}$, $k\in\{1,\dots,7150\}$, and $i\in\{1,\dots, 1000\}$. The points are ``registered" across all the faces i.e. $X_i[k,\cdot]$ and $X_j[k,\cdot]$ represent the same facial feature e.g. the tip of the nose.

We compute the point-wise mean of $D$, the faces, as $\bar{X}=1/n\sum_i X_i$ and display this in the right panel of Figure \ref{fig:meanfaces}. We have a total budget of $\mu_T$ for the entire face which we need to split across all points, $\bar{X}[k,\cdot]$. We sanitize, independently, the 7150 points, $\bar{X}[k,\cdot]$, each of which has 3 coordinates $\bar{X}[k,l]$. To split the budget evenly, each point's coordinate is allocated $\mu_p:=\sqrt{\mu_T^2/(7150\cdot 3)}$ of the budget. That is, we require $(7150\cdot 3)$ many mechanisms and their budget composition is $\sqrt{\sum \mu_p^2}$, since we are using $\mu$-GDP. For each coordinate of each point of the average, $\bar{X}[k,l]$, we compute sensitivity as $\Delta_{k,l}=\max_{i,j}|X_i[k,l]-X_j[k,l]|$. We add noise to each coordinate of each point as $\bar{X}[k,l]+\Delta_{k,l}/\mu_p \cdot z$ where $z\sim \mcN(0,1)$. 
\subsection{M\"obius Transformation}\label{ss:Mobius}
The reparameterization defined in \ref{ss:Param} is computationally expensive. To make the repameterization less expensive, we leverage the M\"obius transformation of conformal maps. We give a high level idea of this transformation with more details available in \cite{choi2015fast,choi2018linear,choi2020parallelizable} and implementation available at GitHub repository \citet{ChoiGitHub}.

The left most panel of Figure \ref{fig:ConfMap} displays a triangulated mesh of a face. The next step in our pre-processing is to generate a disk conformal map. The middle column of Figure \ref{fig:ConfMap} displays two different disk conformal maps of the same triangulated mesh in the left panel. It is, admittedly, difficult to discern features, but in each panel of the middle column one can see approximately four dense areas of which the middle corresponds to the nose. Recall that this disk conformal map embeds the triangulated mesh onto the disk while attempting to locally preserve all angles. 

The panels in the right column of Figure \ref{fig:ConfMap} are the disk parameterized surface corresponding to the adjacent middle column disk conformal map. The center of the disk conformal disk, and hence the disk parameterized surface, is not known a priori. Either of these disk parameterized surfaces will suffice in our construction as we can optimally register it to a template as explained in \ref{ss:esa}. However,  since reparameterization is costly, we can save cost by prespecifying the center of the disk of both the template and each face. A natural choice is to designate the center of the disk with feature such as the tip of the nose.

The bottom panel of the middle row is constructed using the M\"obius transformation on the disk conformal map of the top panel of the middle row. We note that the M\"obius transformation is for the disk conformal map and not the disk paramterized surface. This M\"obius transformation costs effectively no time, but reparameterization of the bottom right disk parameterized surface is much faster than reparameterizing the top right to the template. This is because the methodology in \ref{ss:reg} looks at the gradient about the identity, so shifting the center of the disk requires a lot of energy. Having prespecified the nose as the center of the disk, though, the search space for reparamterization is much smaller. We lastly also note that all from the left and right columns of the figure have the same shape but have different representations; that is to say, shape is not effected by its parameterization nor representation.

\begin{figure}
    \centering
    \begin{tabular}{@{}c@{}@{}c@{}}    
        \begin{tabular}{c}
            \fbox{\includegraphics[trim = 20 50 20 30, clip, height = 1.1in]{TriangulationV10010.png}}  
        \end{tabular} &
    
        \begin{tabular}{@{}c@{} @{}c@{} }       
            \fbox{\includegraphics[trim = 85 0 60 0, clip, height = 1.5in]{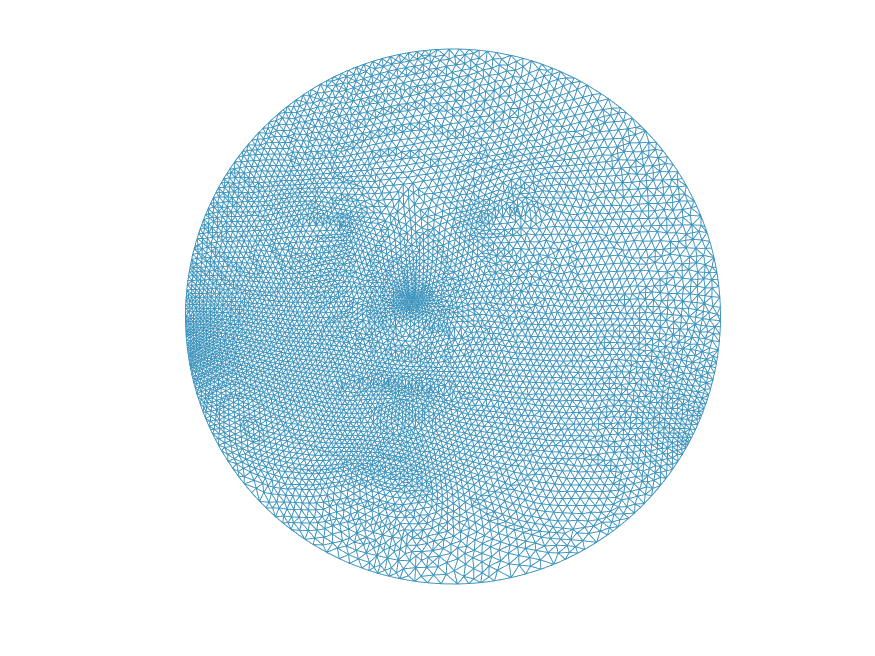}} & 
            \fbox{\includegraphics[trim = 20 50 20 30, clip, height = 1.1in]{ParamFace0010.png}} \\    
            
            \fbox{\includegraphics[trim = 85 0 60 0, clip, height = 1.5in]{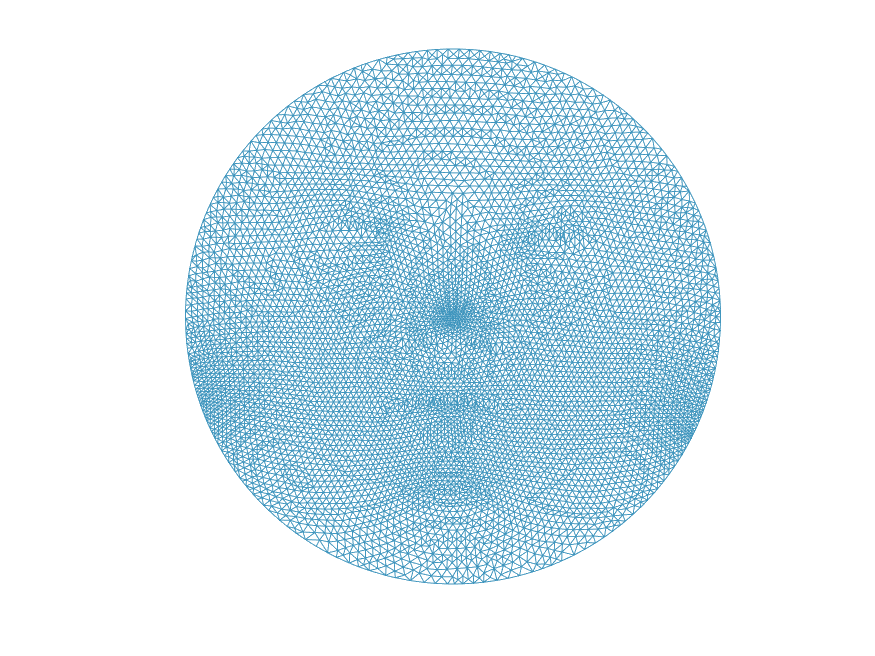}} &    
            \fbox{\includegraphics[trim = 20 50 20 30, clip, height = 1.1in]{ParamFaceMobius0010.png}} 
        \end{tabular}
    \end{tabular}

    \caption{Left column: A triangulated mesh of a face. Middle column: Two disk conformal maps of the triangulated mesh on the left. Right column: The disk parameterized surface resulting from the adjacent middle column conformal map.}
    \label{fig:ConfMap}
\end{figure}

\subsection{Proof Details}\label{ss:proof}

We include details to the proof of Theorem \ref{thm:FDAGDP}. First we show the bound on the upper bound on the privacy loss random variable.
\begin{align*}
    P(PL\geq \epsilon) &= P\left(\log\left[\frac{\exp\{-\frac{1}{2\sigma}(\|h(D)\|_{\mcH}^2-2T_D(x))\}}{\exp\{-\frac{1}{2\sigma}(\|h(D')\|_{\mcH}^2-2T_{D'}(x))\}}\right]\geq \epsilon\right)\\
    &= P\left(\log\left[        
    \exp\{-\frac{1}{2\sigma}       
    \left(\|h(D)\|_{\mcH}^2-\|h(D')\|_{\mcH}^2        
    -2(T_D-T_{D'})(x)\right) \}       
    \right]\geq \epsilon\right)\\
     &= P\left(       
    -\frac{1}{2\sigma^2}       
    (\|h(D)\|_{\mcH}^2-\|h(D')\|_{\mcH}^2        
    -2(T_D-T_{D'})(x))      
    \geq \epsilon\right)\\
\end{align*}
Note that we have $\|h(D')\|_{\mcH}-\|h(D)\|_{\mcH}=\|h(D)-h(D')\|_{\mcH}-2\langle h(D)-h(D'),h(D)\rangle_{\mcH}$ and $\langle h(D)-h(D'),h(D)\rangle_{\mcH}=(T_D-T_{D'})(h(D))$. Thus it follows that,
\begin{align*}
    & P\left(-\frac{1}{2\sigma^2}(\|h(D)\|_{\mcH}^2-\|h(D')\|_{\mcH}^2-2(T_D-T_{D'})(x)) \geq \epsilon\right) \\&= 
        P\left(-\frac{1}{2\sigma^2}\left(-\|h(D)-h(D')\|_{\mcH}^2 -2(T_D-T_{D'})(x-h(D)\right) \geq \epsilon\right) \\
    &\leq  P\left(-\frac{1}{2\sigma^2}\left(-\Delta^2 -2(T_D-T_{D'})(x-h(D))\right) \geq \epsilon\right)\\
    &=  P\left( (T_D-T_{D'})(x-h(D)) \geq \sigma^2\epsilon-\frac{\Delta^2}{2}\right) \\
    &=  P\left( \sigma\Delta Z \geq \sigma^2\epsilon-\frac{\Delta^2}{2}\right) \\
    &=  P(  Z \geq \frac{\sigma\epsilon}{\Delta}-\frac{\Delta}{2\sigma}) 
    =  \Phi\left( -\frac{\sigma\epsilon}{\Delta}+\frac{\Delta}{2\sigma}\right) 
\end{align*}
This is the upper bound we needed of the privacy loss random variable. We similar need a lower bound on the privacy loss random variable. The steps are very similar, we have that,
\begin{align*}
    P(PL\leq \epsilon) &= P\left(\log\left[\frac{\exp\{-\frac{1}{2\sigma}(\|h(D)\|_{\mcH}^2-2T_D(x))\}}{\exp\{-\frac{1}{2\sigma}(\|h(D')\|_{\mcH}^2-2T_{D'}(x))\}}\right]\leq \epsilon\right)\\
     &= P\left(  Z \leq \frac{\sigma\epsilon}{\Delta}-\frac{\Delta}{2\sigma}\right) 
     = \Phi\left(\frac{\sigma\epsilon}{\Delta}-\frac{\Delta}{2\sigma}\right).
\end{align*}
This completes the proof.

\subsection{Additional Results}\label{ss:AddRes}
In \ref{ss:ExRes} we describe how to privatize a point-wise average face. Figure \ref{fig:moreexamples} displays a comparison of two private faces with $\mu_T=2$ and $\mu_T=3$, each from two angles. The two faces look fairly similar, but this is an artifact of the noise making the face rough and fuzzy. The face on the right column has a smoother outline as compared to the face on the left. In Table \ref{tab:Error} we see that the MSE of the face on the right is nearly half of that of the face on the left.

\begin{figure}[t]
    \centering
    \begin{tabular}{|@{}c@{} |@{}c@{}@{}c@{}|}
    \hline
        
        {\includegraphics[trim = 100 0 60 0, clip, height =1.3in]{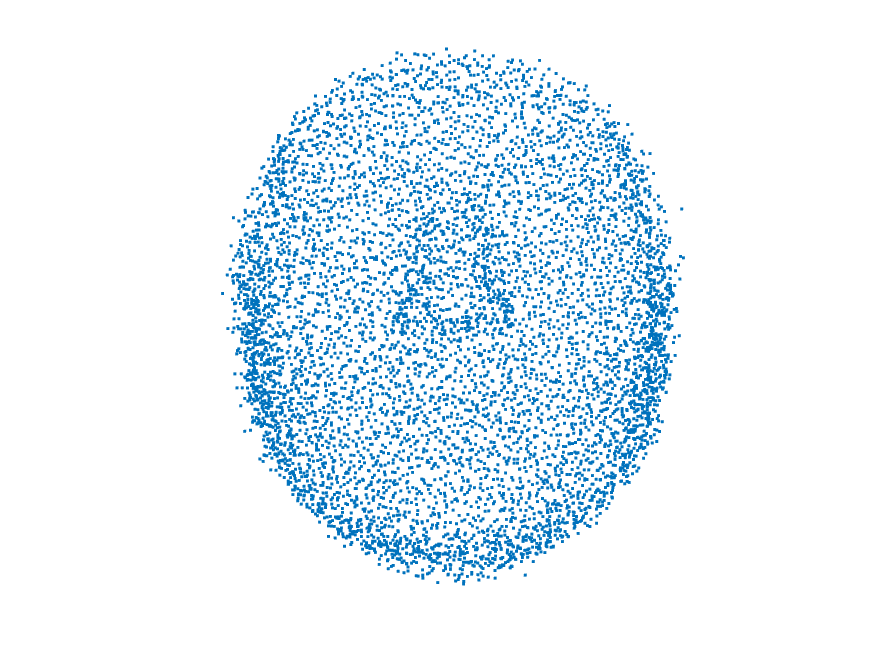}} &
        {\includegraphics[trim = 100 0 60 0, clip, height =1.3in]{PWmeanGDPmu3Ang1}} &
        \\

        {\includegraphics[trim = 100 50 100 50, clip, height =1.3in]{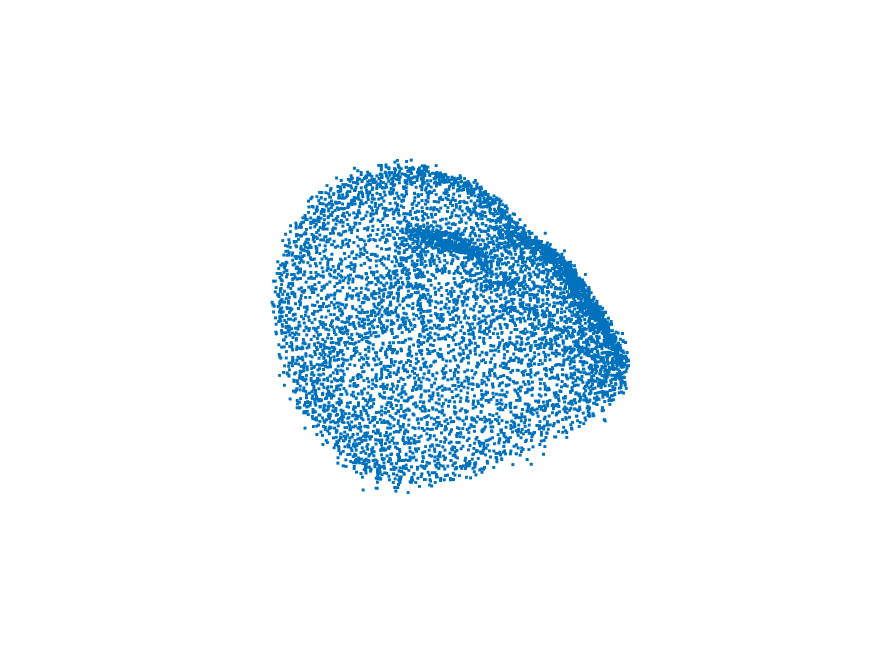}} & 
        {\includegraphics[trim = 100 50 100 50, clip, height =1.3in]{PWmeanGDPmu3Ang2}} &

        \\\hline
    \end{tabular}
    
    \caption{Left: A private face sanitized with Gaussian noise point-wise with total $\mu_T=2$. Right: A private face sanitized with Gaussian noise point-wise with total $\mu_T=3$.}
    \label{fig:moreexamples}
\end{figure}

In the right panel of Figure \ref{fig:facedetail} we show the coordinate curves of one face radial curve for one face; In Figure \ref{fig:facedetails2} we show the coordinate curves of one face radial curve for one hundred faces. We see in Figure \ref{fig:facedetails2} that the curves (and hence the features) are aligned across individuals; this empirically shows the alignment and registration achieved by the methodology in Section \ref{ss:esa}.

\begin{figure}
    \centering
    \begin{tabular}{@{}c@{} @{}c@{} @{}c@{} }
        \fbox{\includegraphics[height=1in]{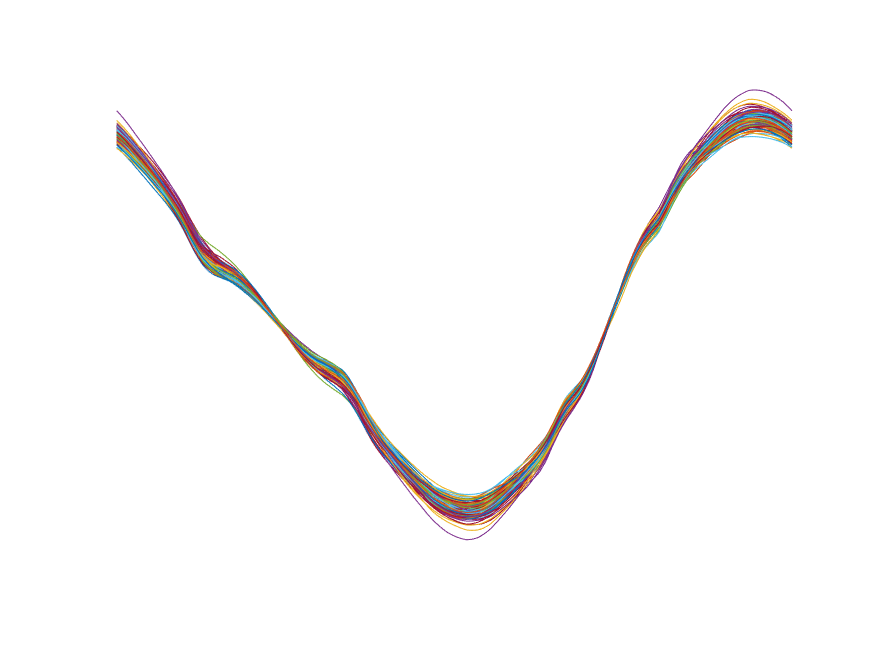} }&
        \fbox{\includegraphics[height=1in]{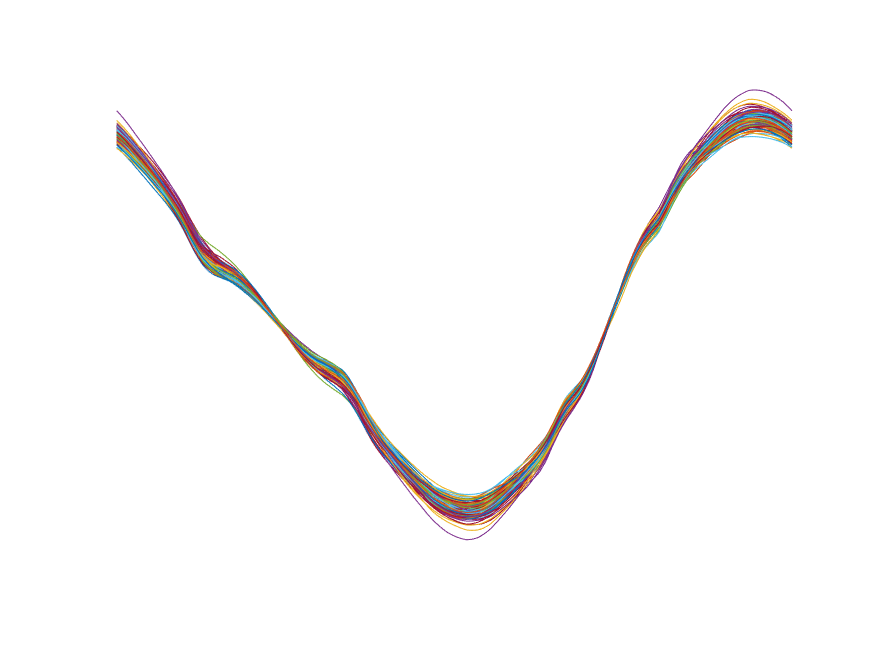} }&
        \fbox{\includegraphics[height=1in]{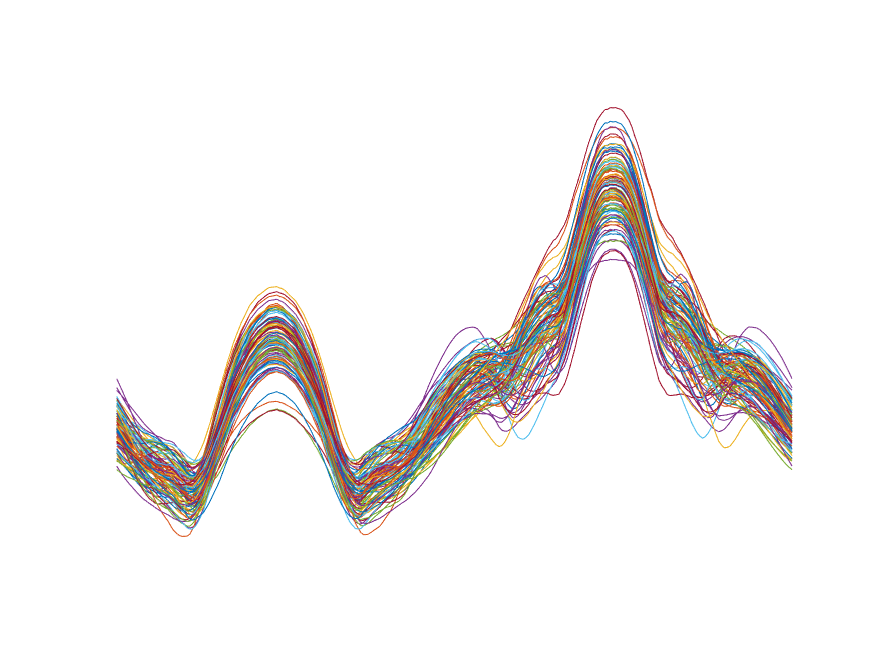} }
    \end{tabular}
        
    \caption{The $x$, $y$, and $z$ coordinate curves, from left to right, of one face radial curve for one hundred individuals.}
    \label{fig:facedetails2}
\end{figure}
\begin{figure}
    \centering
    \includegraphics[width=0.7\linewidth]{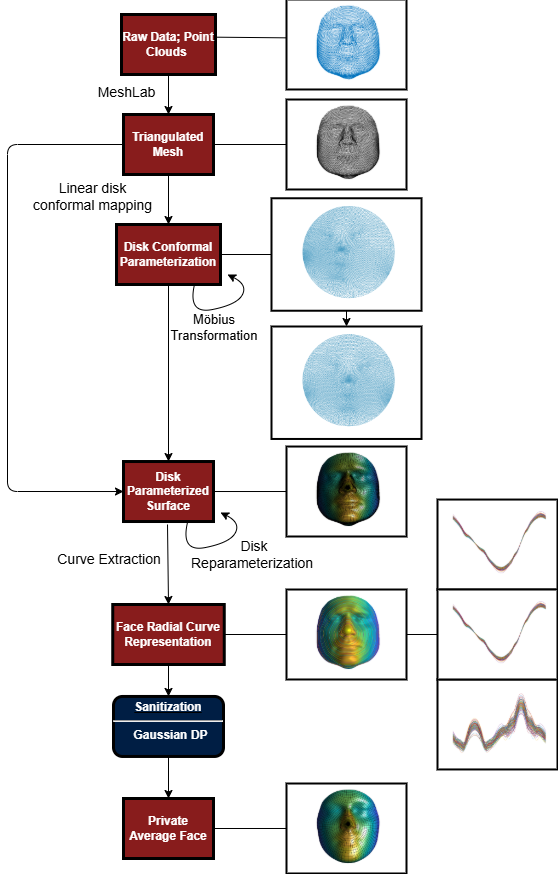}
    \caption{A pipeline diagram of our methodology. We generate disk parameterized surfaces which we register and align to a template face. We extract the face radial curves from the disk parametrization and produce a sanitized mean via our GDP functional data mechanism.}
    \label{fig:diagram}
\end{figure}


     





\end{document}